\documentclass[10pt,journal,cspaper,compsoc]{IEEEtran}
%


%

%
\ifCLASSOPTIONcompsoc
\else
\fi
%

%
\ifCLASSINFOpdf
\else
\fi

\usepackage{color,soul}
\usepackage{graphicx}
\usepackage{epstopdf}
\usepackage{wrapfig}


\newtheorem{theoa}{Theorem}

\hyphenation{op-tical net-works semi-conduc-tor}

\begin{document}
%
\title{On Linear Spaces of Polyhedral Meshes}
%
%
%
%

\author{Roi Poranne \quad  Renjie Chen \quad Craig Gotsman
\\
Technion -- Israel Institute of Technology
}

%
%

\markboth{}%
{}
%


\IEEEcompsoctitleabstractindextext{%
\begin{abstract}
Polyhedral meshes (PM) -  meshes having planar faces - have enjoyed a rise in popularity in recent years due to their importance in architectural and industrial design. However, they are also notoriously difficult to generate and manipulate. Previous methods start with a smooth surface and then apply elaborate meshing schemes to create polyhedral meshes approximating the surface. In this paper, we describe a reverse approach: given the topology of a mesh, we explore the space of possible planar meshes with that topology.

Our approach is based on a complete characterization of the maximal linear spaces of polyhedral meshes contained in the curved manifold of polyhedral meshes with a given topology. We show that these linear spaces can be described as nullspaces of differential operators, much like harmonic functions are nullspaces of the Laplacian operator. An analysis of this operator provides tools for global and local design of a polyhedral mesh, which fully expose the geometric possibilities and limitations of the given topology.
\end{abstract}

\begin{keywords}
polyhedral mesh, linear space
\end{keywords}}

\maketitle



\IEEEdisplaynotcompsoctitleabstractindextext

%
\IEEEpeerreviewmaketitle

\begin{figure*}[th]
\centering
\includegraphics[width=0.95\linewidth]{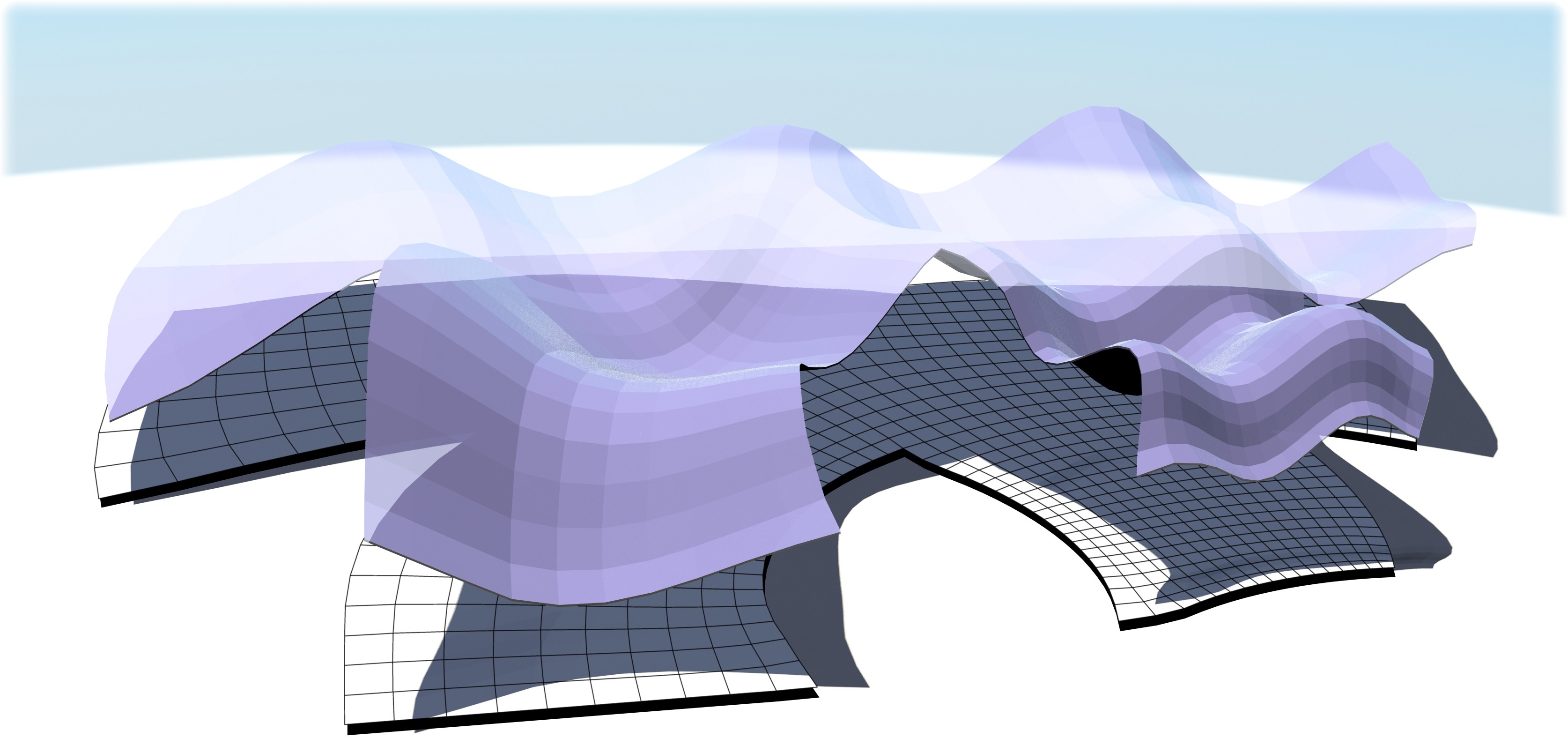}
\caption{A polyhedral mesh constructed from a planar graph using maximal linear subspaces.}
\label{fig:teaser}
\end{figure*}

\section{Introduction}

Polyhedral Meshes (PM's) have gained popularity in recent years due to several new methods that render their construction relatively easy. Typically, a designer creates a traditional free-form surface and then applies a meshing scheme that generates an approximating mesh consisting of only planar faces. Of course, the focus of these schemes, e.g. \cite{Liu1,ZSW11}, is to generate good approximations, and this is done using very specific (regular) types of mesh topologies. It may well be that these are the \emph{only} topologies that can approximate general smooth surfaces well. However, the topology of the mesh itself has its own artistic value: a triangular meshing of a surface will not have the same "look" as a quad or hex meshing. Yet, as mentioned, the cases where a smooth surface can be faithfully meshed into a PM are limited. Hence, we propose a different strategy: instead of constructing the final PM based on a design of a surface, we explore the space of possible PM's with a given topology. Such a PM is called a \emph{realization} of the topology. 

Our approach is based on the observation that the complicated manifold of PM's with a given topology can be decomposed into overlapping, linear spaces, each of which is \emph{maximal} - adding a base PM to the space will introduce non-PMs to the space. The advantage of linear spaces lies in the simplicity of exploring them: PM's in such a space may be designed by forming linear combinations of a spanning set of basic PM's. The disadvantage is that the dimensionality of these spaces is much smaller than that of the complete manifold of PM's. Thus, proving that they are indeed maximal is crucial. By switching between spaces, it is possible to reach any PM in the manifold. We will refer to the PM's of a spanning set simply as \emph{shapes}. 

The use of linear spaces can be incorporated into well-known mesh deformation methods, such as as-rigid/similar-as-possible \cite{Igarashi:2005:ASM:1186822.1073323}. In addition, we propose three types of shapes aimed at different levels of design, exposing the possibilities and limitations for deforming a given PM; the reason for their names will subsequently become clear. \emph{Eigenshapes} are globally smooth shapes at different \emph{frequencies} akin to the eigenvectors of the Laplacian. \emph{Sparse shapes} are based on the smallest groups of vertices that can move together without impairing the planarity of the faces of the PM. Finally, \emph{fundamental shapes} allow a single vertex to be moved with minimal change to other vertices while preserving planarity.

\subsection{Related  work}
\textbf{Meshing and Planarization}. The creation of polyhedral meshes is an active field of research. The most common problem is to mesh, or remesh, a free-form into a PM. The approach used by Cohen-Steiner et al. \cite{Cohen-Steiner} is to try to fit a limited number of planes to the surface and then intersect them. The surface is first partitioned into a user-defined number of \emph{almost} flat regions, for each of which a plane is fitted. These planes, called \emph{shapes proxies}, will generally not have well-defined intersection points. Thus, the faces they produce are only \emph{close} to being planar. Cutler and Whiting \cite{Cutler} added an iterative optimization process to the algorithm that guarantees that the resulting faces are planar. 

In both of these systems, the user can control the number of faces and their density in the result, but cannot dictate the mesh topology (its edge structure), which can essentially be arbitrary. While this is not necessarily a drawback, in some cases a regular mesh is desirable. Liu et al. \cite{Liu1} and Wang et al. \cite{wang2008hexagonal} showed how a surface may be meshed into a planar quad-dominant (PQ) mesh and a planar hexagonal (P-Hex) mesh, respectively. The two algorithms are quite similar: an almost polyhedral mesh is first generated from the surface, based on differential geometric entities (PQ meshes are based on conjugate networks and P-Hex meshes on the Dupin indicatrix. \cite{ZSW11} and \cite{LXW11} elaborated on how to design better conjugate networks.) A subsequent step involves the planarization of the result: a non-linear optimization, where the vertices of the mesh are repositioned to make the faces planar. This latter step seems to dominate the runtime, and does not scale well with mesh size. Alexa and Wardetzky \cite{AW2011} demonstrated the construction of a Laplacian operator on non-triangular meshes. As a side effect of their construction, they were able to devise a related operator that measures the planarity of faces. With this new operator, they obtained a \emph{planarizing flow}, that is, a geometric flow that flattens faces. In \cite{planarization}, a local/global based alternating algorithm was used to solve the planarization problem very efficiently. The improved performance enables interactive deformation of PM's.

\textbf{Mesh deformation}. The problem of editing and deforming mesh geometry is one of the most studied topics in geometry processing. Most mesh deformation methods are intended to work exclusively with triangle meshes. See \cite{PMP:2010} for a thorough introduction. These methods may be classified into two types, based on the type of user interaction employed. In the first type, the user directly modifies the surface using one of a number of common design metaphors. The most relevant to us are the handle-based methods (e.g. \cite{Ben-Chen:2009:VHM:1531326.1531340, Igarashi:2005:ASM:1186822.1073323, Sorkine:2007:ASM:1281991.1282006, Sorkine:2004:LSE:1057432.1057456}), where the user controls the deformation by moving a small number of points on the mesh. These points generate constraints for an optimization problem, whose solution is the deformed mesh. Other common design metaphors are skeleton-based and cage-based. Jacobson et al. \cite{Jacobson:2011:BBW:1964921.1964973} noted the differences between these methods and provided a hybrid method incorporating both. More intricate approaches for mesh deformation use direct control of the mesh normal and curvature instead of vertex positions \cite{journals/cgf/EigensatzSP08,journals/cgf/EigensatzP09}.

\begin{figure}[th]
\centering
\includegraphics[width=1\linewidth]{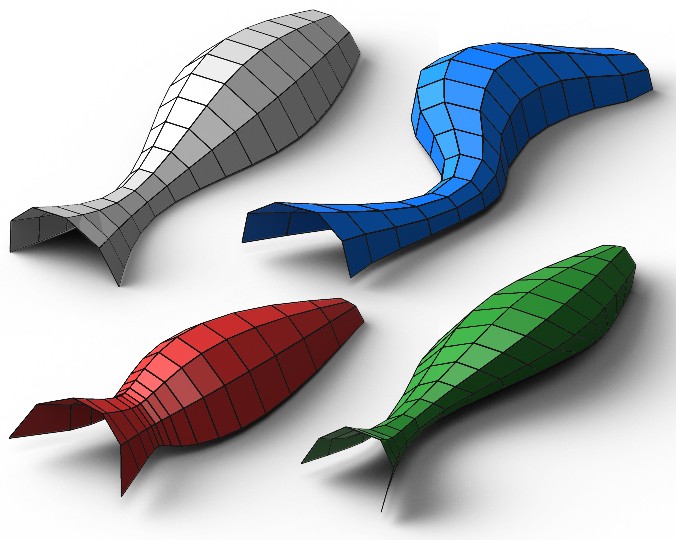}
\caption{Deforming a PM in various linear subspaces. Each result is not achievable in the other two subspaces.}
\label{fig:1}
\end{figure}

Handle-based deformation has also been used in the context of PM's. In \cite{Yang:2011:SSE:2070781.2024158}, the \emph{manifold of polyhedral meshes} was discussed in detail. The idea was to approximate this manifold by an osculant, which is much easier to explore. In this framework, deformation of a PM using positional constraints was made possible; however, computing the osculant is time-consuming and the deformation only \emph{approximately} preserves the planarity of faces. \cite{zhao-2012-idecm} uses the same technique to derive a \emph{curve}-based deformation. 

Recently, Vaxman \cite{Vaxman:2012:MPM:2346796.2346801} described a linear space of PM's by allowing affine transformations per face. In this work, it was proposed to use it instead of the entire manifold, simplifying the math considerably. In fact, this linear space is a special case of the linear spaces to be described in this paper. The main drawback of using this space is its small number of degrees of freedom (dimension). For example, the number of degrees of freedom of a quad PM is about half the size of its boundary, so when the mesh has no boundary, only the trivial, global, transformations are possible (e.g. global rotations). Hexagonal PM's will have only 12 degrees of freedom, regardless of the existence of a boundary. In other words, specifying the geometry of 4 vertices of a PM with hexagonal topology uniquely determines the rest of the PM. Pottmann et al. \cite{Pottmann:2007:GMF:1276377.1276458} described another linear space of PM's, called parallel meshes. It is also a special case of the spaces to be described in this paper.

A second type of mesh deformation is \emph{indirect}. These include various methods that improve the quality of a mesh, such as smoothing and enhancing features. More relevant to us are methods that are used to add variation to a mesh, or to create a collection of meshes based on a single mesh (e.g. \cite{citeulike:2732270}). \cite{Yang:2011:SSE:2070781.2024158}) has also contributed an indirect deformation approach, by designing a user interface which allows to traverse the osculant with ease. In this paper we propose eigenshapes as a way of indirectly adding variation to a PM.

\subsection{Contribution and overview}
In Section 2 we discuss linear subspaces of PM's in detail. We characterize all of the possible maximal subspaces, and show how to construct them. In Section 3 we describe a number of meaningful sets of shapes for editing PM's. In section 4 we discuss practical consideration and limitations.

\section{Linear Subspaces of Polyhedral Meshes}

\textbf{Preliminaries}. In our context, a mesh is defined by a list of vertex geometry and a list of faces. The vertex geometry can be arranged in a $3\times n$ matrix, where $n$ is the number of vertices. We will usually denote this matrix by an upper-case letter, such as $X$ or $Y$. In other words, the vertex geometry is given by
$$X=(x_1,x_2,...,x_n )$$
where the $x_i$ are column 3-vectors. We denote by $F=\{f_j\}_{j=1}^{N_F}$ the set of faces of the mesh, where each face is described as an ordered (oriented) list of vertices. In most cases, $F$ will be common to several meshes, and we will refer to them only by their matrices. Finally, we will denote the vertex coordinates of the face $f$, which is a submatrix of $X$, by $X_f$.

\textbf{Manifolds of Meshes}. When two meshes have the same topology, their linear combination can be defined simply as a linear combination of their vertex geometries. In other words, two meshes $X$ and $Y$ span a linear subspace of meshes defined by
$$\alpha X+\beta Y,\alpha,\beta \in R$$
We can consider the set of all meshes with n vertices and a given topology to be vectors in $R^{3n}$. Obviously, the dimension of this space is $3n$ and is isometric to $R^{3n}$.

Linearly combining two meshes is meaningful because the set of all possible meshes (with a given topology) is a linear space. \emph{PM's}, on the other hand, reside in a complicated, curved submanifold in this space. Linearly combining two PM's will usually not result in a PM, which is the cause of many of the problems in using them. It so happens that the manifold of PM's may be \emph{covered} by linear submanifolds, which we discuss next. By replacing the non-linear constraints defining the manifold of PM by linear ones, many of the problems related to PM design disappear. We emphasize two important points. First, the dimensions of the linear subspaces are much less than $3n$. Hence, making sure that they are \emph{maximal} is crucial. Second, the set of linear subspaces must \emph{cover} the submanifold of PM's in such a way that any PM will be reachable from any other PM by moving through those linear spaces alone. 

\textbf{Linear Spaces}. To investigate the linear subspaces of PM's we first examine the linear subspaces of much simpler entities: planar polygons; they are simply PMs with a single face, so we can use the same notation. We will further simplify the discussion by assuming that the polygons are geometrically centered at the origin. This is not detrimental to the generality of the arguments, since centering is a linear operation. In addition, we will assume that the polygons are not degenerate. While degenerate polygons have a place in this theory, they do not appear in practice, and therefore cause an unnecessary complication.

\begin{theoa}
Let $X$ and $Y$ be two $3\times k$ matrices representing the geometry of two planar k-gons in $R^3$ with $k > 3$ vertices, centered at the origin. Let $N_X$ and $N_Y$ be the unit $(1\times 3)$ row vectors normal to the planes defined by $X$ and $Y$ respectively. Then $X$ and $Y$ span a linear subspace of planar polygons iff at least one of the following holds:
\end{theoa}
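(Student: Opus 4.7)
The plan is to translate the condition ``$\alpha X + \beta Y$ is planar for every $\alpha, \beta$'' into vanishing of explicit polynomial expressions in $\alpha, \beta$, and then split into cases. A centered $3 \times k$ matrix is planar iff it has rank at most $2$, iff all its $3 \times 3$ minors vanish. Fix a triple of vertex indices $i, j, k$ and expand the triple product $[\alpha x_i + \beta y_i,\, \alpha x_j + \beta y_j,\, \alpha x_k + \beta y_k]$ as a cubic polynomial in $\alpha, \beta$; the $\alpha^3$ and $\beta^3$ coefficients vanish automatically by the planarity of $X$ and $Y$, so only the two mixed coefficients impose real constraints, and these must vanish for every triple. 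Using the identity $x_i \times x_j = s_{ij} N_X^T$ (where $s_{ij}$ is twice the signed area of triangle $O x_i x_j$ in $P_X$), the $\alpha^2 \beta$ coefficient rewrites as the trilinear identity $s_{ij}\eta_k + s_{jk}\eta_i + s_{ki}\eta_j = 0$, with $\eta_i = N_X \cdot y_i$. This is the $3 \times 3$ determinant of the matrix with rows $X_1, X_2, \eta$ restricted to columns $i, j, k$, so demanding it vanish for every triple is equivalent to the $k$-vector $\eta$ lying in the row space $\mathcal R_X$ of $X$, a $2$-dimensional subspace of $R^k$ by non-degeneracy. The $\alpha \beta^2$ coefficient gives the symmetric conclusion $\xi = N_Y \cdot x \in \mathcal R_Y$.

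Now the case analysis. If $N_X = \pm N_Y$ then $P_X = P_Y$ and the conclusion is trivial, so assume otherwise and choose orthonormal coordinates with $N_X = e_3$ and $N_Y = (0, \sin\theta, \cos\theta)$, making $L = P_X \cap P_Y$ the $x$-axis. Then $X$ has zero third row and $Y$ has third row proportional to its second, so the two conditions reduce to $Y_2 \in \mathrm{span}(X_1, X_2)$ and $X_2 \in \mathrm{span}(Y_1, Y_2)$, where subscripts index rows. Write $Y_2 = a X_1 + b X_2$. If $a \neq 0$, substituting into the second condition and isolating $Y_1$ shows that $Y_1 \in \mathrm{span}(X_1, X_2) = \mathcal R_X$, so $\mathcal R_Y \subseteq \mathcal R_X$ and a dimension count forces $\mathcal R_Y = \mathcal R_X$. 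If $a = 0$, then $Y_2$ is a nonzero scalar multiple of $X_2$ (nonzero, otherwise $Y$ would lie on $L$, contradicting non-degeneracy). Together with the coplanar case, these are the three alternatives that should appear in the theorem statement.

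Each alternative is straightforward to verify as sufficient: coplanarity is immediate; $\mathcal R_X = \mathcal R_Y$ provides a linear map $T : R^3 \to R^3$ with $y_i = T x_i$, so $\alpha X + \beta Y = (\alpha I + \beta T) X$ has columns inside the at-most-two-dimensional image $(\alpha I + \beta T)(P_X)$; and in the third alternative, rows $2$ and $3$ of $\alpha X + \beta Y$ are both scalar multiples of the single $k$-vector $X_2$, forcing rank $\leq 2$. The main obstacle is the case analysis above, particularly the short manipulation that upgrades ``$a \neq 0$'' to the full row-space equality; one must also carefully exclude the degenerate configurations in which either polygon lies on $L$, which is where the non-degeneracy hypothesis enters.
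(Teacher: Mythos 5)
Your proof is correct, but it takes a genuinely different route from the paper's. The paper works directly with the normal vectors $N(\alpha,\beta)$ of the combined polygons and splits on whether the set of all such normals spans $R^3$ (yielding type 1 by inverting a $3\times 3$ system built from three independent normals) or a lower-dimensional space (yielding type 2 after expressing $N(\alpha,\beta)$ in the basis $\{N_X,N_Y\}$). You instead encode planarity of a centered polygon as the vanishing of all $3\times 3$ minors, expand each minor as a homogeneous cubic in $(\alpha,\beta)$, and collapse the quantifier ``for all $\alpha,\beta$'' into the two linear-algebraic conditions $N_X Y\in\mathcal{R}_X$ and $N_Y X\in\mathcal{R}_Y$, followed by a coordinate case analysis. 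What your route buys is rigor exactly where the paper is loose: the paper tacitly assumes every combination $\alpha X+\beta Y$ has a well-defined normal (which fails if a combination degenerates to rank $\le 1$), that the low-dimensional normal set is spanned by $N_X$ and $N_Y$, and that the ratio $A\beta/(B\alpha)$ is independent of $\alpha,\beta$; your polynomial-identity formulation sidesteps all of this and makes the role of non-degeneracy explicit. What the paper's argument buys is brevity and a direct geometric picture of the one-parameter family of normals. Your three alternatives do match the paper's two: $\mathcal{R}_X=\mathcal{R}_Y$ is type 1, while both your coplanar case and your $Y_2\propto X_2$ case are instances of $N_Y X=cN_X Y$. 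Two small points to tidy: when $\cos\theta=0$ the second row of $Y$ vanishes and the argument should be run with $Y_3$ in place of $Y_2$; and ``isolating $Y_1$'' in the second condition requires noting that its coefficient cannot vanish when $a\ne 0$, by independence of $X_1$ and $X_2$. Both are routine.
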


\emph{Relationship of type 1}: $X$ is an affine transformation of $Y: X=AY$ for some $3\times 3$ matrix $A$.

\emph{Relationship of type 2}: $N_Y X=cN_X Y$ for some scalar $c$.

\begin{proof}: By definition,
\begin{equation} \label{eq:1}
    N_X X=N_Y Y=0
\end{equation}
where 0 is a zero k-vector. Assume that $X$ and $Y$ span a linear subspace of planar polygons. This implies that every linear combination of them defines a plane and thus has a normal vector. In other words, for each $\alpha$ and $\beta$, there exists a vector $N(\alpha ,\beta)$ such that
\begin{equation} \label{eq:2}
N(\alpha ,\beta )(\alpha X+\beta Y)=0
\end{equation}
Consider the set of vectors ${N(\alpha ,\beta ),\alpha ,\beta \in R}$. First, assume that it has dimension 3. Then there exist $N_i=N(\alpha _i,\beta _i ), i=1,2,3$ such that the $N_i$'s are not collinear. By (\ref{eq:3}), we can write
$$(\alpha _i N_i )X=-(\beta _i N_i )Y$$
where $(\alpha _i N_i )$ is a $3\times 3$ matrix whose rows are $\alpha_i N_i$. $(\beta_i N_i)$ is defined similarly. Since the $N_i$'s are independent, we can invert $(\alpha _i N_i ): X=-(\alpha _i N_i )^{-1} (\beta _i N_i )Y$. Hence, $X$ is an affine transformation of $Y$, which is type 1.

If the set of normals has dimension less than 3, then this set must be spanned by $N_X$ and $N_Y$. Thus, we can write (\ref{eq:2}) as
$$(A(\alpha ,\beta ) N_X+B(\alpha ,\beta ) N_Y )(\alpha X+\beta Y)=0$$
Expanding the LHS and using (\ref{eq:1}) we obtain
$$N_Y X=-\frac{A\beta}{B\alpha} N_X Y$$
which, noting that  $\frac{A\beta}{B\alpha}$  is a constant, is the relationship of type 2.
In the other direction, first, if $X$ and $Y$ are planar and $X$ is an affine transformation of $Y$ (type 1 relationship) then the rank of each of the matrices $X$ and $Y$ is 2 and there exists a $3\times 3$ matrix $A$ such that $X=AY$. Hence, their combination
$$\alpha X+\beta Y=(\alpha A+\beta I)Y$$
has $rank \le 2$ and thus is planar.

Second, if $X$ and $Y$ are planar and $N_Y X=cN_X Y$ for some scalar $c$ (type 2 relationship), then for any scalars $\alpha$ and $\beta$ we can choose $A$ and $B$ such that $c=\frac{A\beta} {B\alpha}$. Working our way backwards, this implies that
$$(AN_X+BN_Y )(\alpha X+\beta Y)=0$$
concluding that $X$ and $Y$ spans a linear subspace of planar polygons. 
\end{proof}

The following corollaries follow immediately:

\textbf{Corollary 1}: If $X$ and $Y$ are parallel planar polygons, then they span a linear subspace of planar polygons.

\textbf{Corollary 2}: if $X$ and $Y$ are planar polygons not related by any affine transformation, yet they span a linear space of planar polygons, then the normals of their linear combinations must be linear combinations of their normals.

Using Theorem 1, it is simple to prove an analogous result for PM's:

\begin{theoa}
Let $X$ and $Y$ be two PM's in $R^3$ with common topology. Then $X$ and $Y$ span a linear space of PM's iff each non-triangular face of $X$ has a type 1 or type 2 relationship with the corresponding face of $Y$. 
\end{theoa}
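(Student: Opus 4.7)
The plan is to reduce this to Theorem~1 applied independently to each non-triangular face. The starting observation is that a linear combination $\alpha X + \beta Y$ is a PM if and only if \emph{every} face is planar, which decouples the problem: for each face $f \in F$, the submatrix $(\alpha X + \beta Y)_f = \alpha X_f + \beta Y_f$ must be planar for all $(\alpha,\beta) \in \mathbb{R}^2$. So $X$ and $Y$ span a linear space of PM's iff, for every face $f$, the pair $(X_f, Y_f)$ spans a linear space of planar $k_f$-gons.

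Next I would dispense with the triangular faces. If $f$ is a triangle, then $X_f$ and $Y_f$ are $3\times 3$ matrices and $\alpha X_f + \beta Y_f$ has only three columns, i.e.\ three vertices in $\mathbb{R}^3$, which are automatically coplanar. Hence triangular faces impose no relationship between $X$ and $Y$, which matches the qualifier ``non-triangular'' in the statement.

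For each non-triangular face $f$ ($k_f > 3$), I would invoke Theorem~1. The only subtlety is that Theorem~1 was stated for polygons centered at the origin, whereas face submatrices of a PM typically are not. I would handle this by centering each face: set $\tilde X_f = X_f - c_f^X \mathbf{1}^\top$ and $\tilde Y_f = Y_f - c_f^Y \mathbf{1}^\top$, where $c_f^X, c_f^Y$ are the centroids. Planarity is translation invariant, and centering commutes with linear combination, since $\alpha \tilde X_f + \beta \tilde Y_f$ is exactly the centering of $\alpha X_f + \beta Y_f$. Therefore planarity of $\alpha X_f + \beta Y_f$ for all $\alpha,\beta$ is equivalent to planarity of $\alpha \tilde X_f + \beta \tilde Y_f$ for all $\alpha,\beta$, and Theorem~1 applied to $\tilde X_f, \tilde Y_f$ gives exactly the type~1 / type~2 dichotomy. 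Translating back: type~2 is unchanged because the face normals $N_{X_f}, N_{Y_f}$ are translation-invariant, and a type~1 relation $\tilde X_f = A \tilde Y_f$ lifts to an affine relation $X_f = A Y_f + t\mathbf{1}^\top$ between the uncentered faces, which is the relationship intended in the statement.

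The main obstacle is essentially cosmetic: ensuring that the centering hypothesis of Theorem~1 does not force a loss of generality when we apply it face-by-face inside a PM, and checking that both type~1 and type~2 relationships survive the lift from centered faces back to the original faces. Once this is verified, both directions of Theorem~2 follow immediately: the forward direction by applying Theorem~1 to each non-triangular face separately, and the converse by combining the planarity of each face (trivial for triangles, Theorem~1 for $k_f > 3$) into planarity of the whole mesh.
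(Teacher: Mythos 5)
Your proposal is correct and follows exactly the route the paper intends: the paper gives no explicit proof of Theorem~2, stating only that it follows ``simply'' from Theorem~1, and your face-by-face decomposition (triangles trivially planar, Theorem~1 applied to each $k_f>3$ face) is precisely that argument. Your handling of the centering hypothesis --- observing that centering commutes with linear combinations and that a type~1 relation on centered faces lifts to an affine relation with translation on the original faces --- is a detail the paper glosses over, and it is handled correctly.
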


\begin{figure}[th]
\centering
\includegraphics[width=1\linewidth]{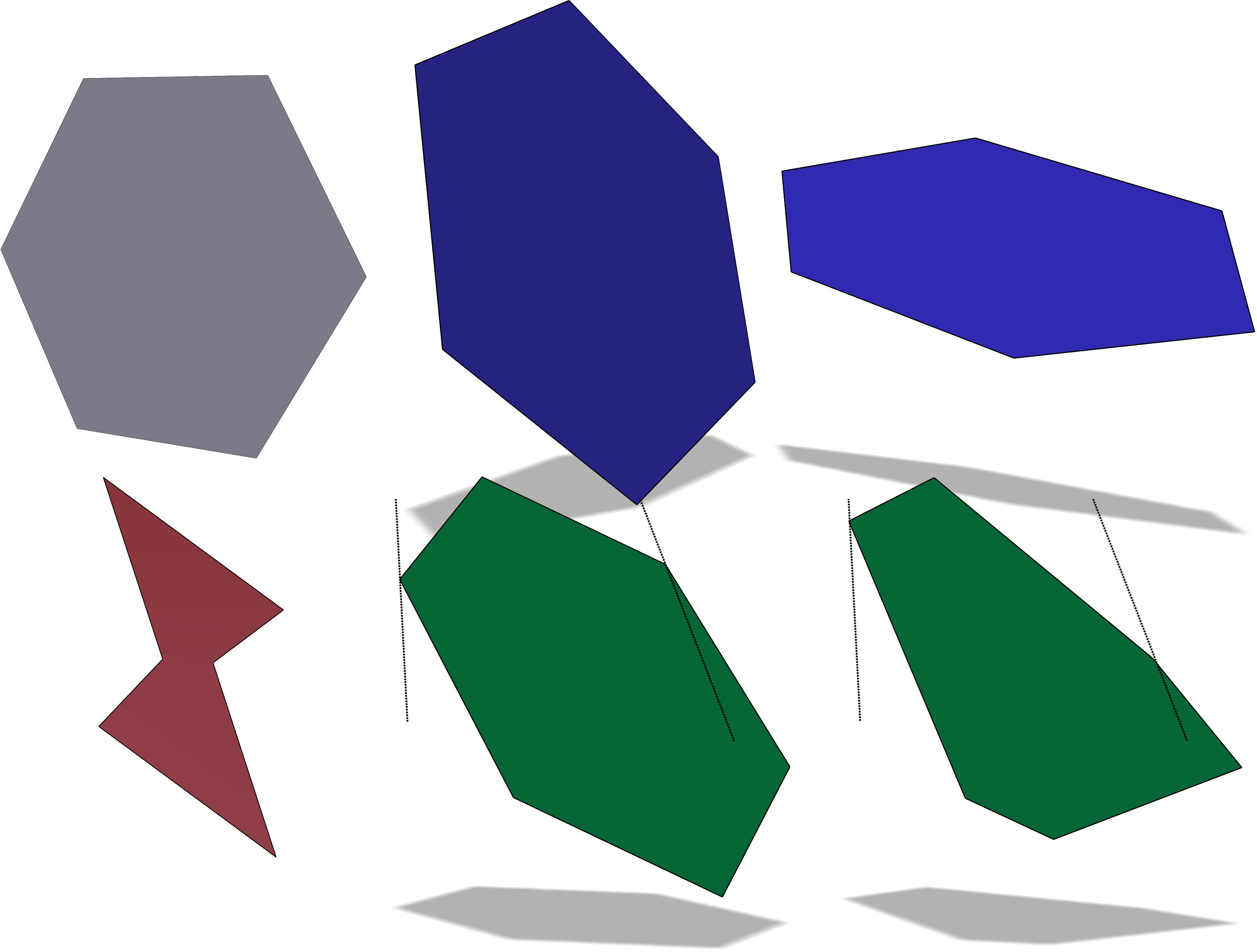}
\caption{Illustrating polygon relationship types. The blue hexagons are related to the gray one (top left) by an affine transformation, hence of type 1. The red hexagon is parallel to the gray one, which is a special case of type 2. The two green hexagons have identical normals and a type 2 relationship to the gray one. The relationship is maintained as long as the vertices of the green polygons slide on the dotted lines, which are equidistant from the plane of the grey hexagon.}
\label{fig:3}
\end{figure}

\textbf{Generating subspaces}. Our goal is to explore the linear subspaces of PM's which contain a given PM, but first we need to characterize those linear subspaces. We say that a PM $Y$ \emph{generates} a linear subspace if that subspace contains $Y$. Given a linear subspace of PM's $V$ generated by $Y$, Theorem 2 tells us that the relationship of each face of any $X\in V$ to the corresponding face of $Y$ must be either of type 1 or type 2. Therefore, we can generate a subspace $V$ from $Y$ by first specifying, for each face of Y, which type of relationship applies, and then finding all possible $X$ that are related to $Y$ in that way.

In practice, we can treat each face $Y_f$ of $Y$ as a separate planar polygon, and write a linear system for $X_f$,
\begin{equation} \label{eq:3}
B_f vec(X_f )=0
\end{equation}
where the matrix $B_f$ depends on $Y_f$ and the type of relationship, and $vec(X_f)$ is a vectorization of the matrix $X_f$. We can combine all of the small linear systems into one large linear system
$$Bvec(X)=0$$
and the solution space of this system, namely, the nullspace of $B$, is exactly $V$. The matrix $L_B=B^T B$ may also be used since they share the same null space. We remark that $L_B$ can be normalized to give a Laplacian-like operator. This operator may be related to yet another differential operator described in Appendix 1, but we have yet to pursue this connection.

To construct $B_f$ we consider each type of relationship separately. A relationship of type 1 means that $X_f$ must be an affine transformation of $Y_f$. Thus $X_f$ must satisfy
$$X_f (Y_f^+ Y_f-I)=0$$
where $Y_f^+$ is the pseudo-inverse of $Y_f$. This equation can easily be transformed into the form (\ref{eq:3}). PM's related in this manner were explored by Vaxman \cite{Vaxman:2012:MPM:2346796.2346801}.

As for a relationship of type 2, we start by examining its definition, $N_{Y_f} X_f=cN_{X_f}Y_f$. Given $Y_f$, the normal $N_{Y_f}$ is also given and so we are left with $X_f$, $N_{X_f}$ and $c$ as variables. 
Based on $N_{X_f}$, we can further divide this into 2 sub-cases:
\begin{enumerate}
\item    $N_{X_f}=N_{Y_f}$.
    In this case, we simply have,
    $$N_{Y_f} X_f=N_{X_f} X_f=0$$

\item    $N_{X_f}\ne N_{Y_f}$.
    Note that in this case the (column) vector $N=N_{X_f}\times N_{Y_f}$ is in the intersection of the planes of $X$ and $Y$. We have the following decomposition for $X_f$ and $Y_f$,
    $$X_f=NX_1+E_X X_2, \quad Y_f=NY_1+E_Y Y_2$$
    where $E_X=N\times N_X$ and $E_Y=N\times N_Y$, and $X_i$,$Y_i$,$i=1,2$ are row vectors with an element for each vertex. Thus $N_Y X_f=N_Y E_X X_2$ and $cN_X Y_f=cN_X E_Y Y_2$, and the relationship of type 2 implies
    $$X_2=\frac{cN_X E_Y}{N_Y E_X} Y_2=c' Y_2$$
    which means that $X_f=NX_1+c' E_X Y_2$. We apply the cross product by $N$ to both sides to get
    $$X_f\times N=c' E_X Y_2\times N$$
    Let $M$ be the null space of $Y_2$, i.e. $Y_2 M=0$. We finally have that $(X_f\times N)M=0$, which can also be written in the form (\ref{eq:3}). This means that in order to generate a linear subspace based on a relationship of type 2, we must first prescribe the normal $N_X$ or equivalently, $N$.
\end{enumerate}

Thus given a planar polygon, we can generate 3 types of linear subspaces of planar polygons. We now proceed to prove that each type of space generated for a face is maximal. Again we start with the simpler case of planar polygons.

\begin{theoa}
Let $V=null(B)$ be a linear subspace generated by a (non-degenerate) polygon $X$, where $B$ is constructed as described above. Then $V$ is a maximal linear subspace of planar polygons.
\end{theoa}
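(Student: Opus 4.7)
I would argue by contradiction. Suppose there is a linear subspace $V'$ of planar polygons strictly containing $V$, and pick $Z\in V'\setminus V$. Then for every $Y\in V$ the pair $(Y,Z)$ spans a linear subspace of planar polygons, so by Theorem~1 it must satisfy either a type 1 or a type 2 relationship. I would handle the three generating cases behind $B$ in turn and derive a contradiction in each.

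For the type 1 generator ($V=\{AX : A\in\mathbb{R}^{3\times 3}\}$), any type 1 relation $Z=A'Y=A'AX$ immediately gives $Z\in V$, so type 2 must hold for every $Y=AX\in V$, i.e., $N_{AX}Z = c_A N_Z(AX)$. Working in coordinates where $X$ lies in the $xy$-plane, the right-hand side always lies in the span of the first two rows of $X$, while the left-hand side equals $(N_X A^{-1})Z$ (using $N_{AX}\propto N_X A^{-1}$). As $A$ ranges over invertible matrices, the row 3-vector $N_X A^{-1}$ sweeps out every nonzero row 3-vector, so every row of $Z$ is forced into that same span, giving $Z\in V$, a contradiction. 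This is the main obstacle in the proof: neither Theorem~1 alternative fails automatically for a single $Y$, so the contradiction must be extracted by aggregating the type 2 equations across the whole family $\{AX\}$, and the surjectivity of $A\mapsto N_X A^{-1}$ is the decisive observation.

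The two remaining cases can be dispatched by a genericity argument. For the type 2 sub-case 1 generator ($V=\{W : N_X W=0\}$, the entire plane of $X$), I would pick $Y\in V$ whose two nontrivial rows span a generic 2-dimensional subspace $R_Y\subset\mathbb{R}^k$: type 1 would force the rows of $Z$ into $R_Y$ (impossible since $Z$'s row-span differs from $R_Y$ generically), and type 2 would force the nonzero row $N_X Z$ into $R_Y$ via $N_X Z = cN_Z Y$ (also impossible for generic $R_Y$). For the type 2 sub-case 2 generator, whose nullspace consists of polygons of the form $N\lambda + w Y_2$ with $w$ perpendicular to $N=N_X\times N_Y$, the condition $Z\notin V$ means either $Z$'s plane misses the direction $N$, or $Z$'s component perpendicular to $N$ is not proportional to $Y_2$. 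In the first case, every $Y\in V$ has plane containing $N$ and hence distinct from the plane of $Z$, so columns of $\alpha Y+\beta Z$ span $\mathbb{R}^3$ generically; in the second case, choosing $Y\in V$ whose direction $w$ is not parallel to $Z$'s perpendicular direction gives columns of $\alpha Y+\beta Z$ spanning three linearly independent directions. Either way, some combination is non-planar, violating Theorem~1.
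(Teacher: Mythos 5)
Your overall strategy coincides with the paper's: assume a strictly larger linear space exists, apply Theorem~1 pairwise between the new polygon and members of $V$, and exploit the freedom in choosing those members to force a contradiction. Where you differ is in the witnesses used. For the affine case the paper perturbs $X$ only by rotations about $N_X$ and deduces that $N_X$ and $N_Y$ must be collinear, whereas you sweep over all invertible $A$ and use the surjectivity of $A\mapsto N_XA^{-1}$ to trap the entire row space of $Z$ inside that of $X$; your version is arguably cleaner and delivers the stronger conclusion $Z\in V$ directly. For the parallel and prescribed-normal cases the paper again uses explicit geometric moves (the same rotation trick, and a translation of the vertices of $X$ along the shared direction $N$), while you argue by genericity of row spans. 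Both routes are legitimate; the paper's witnesses are more geometric, yours more linear-algebraic, and both are written at roughly the same level of rigor as the paper's own sketch.

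Two steps you should tighten. First, in the prescribed-normal case, sub-case (a), you justify non-planarity of $\alpha Y+\beta Z$ by saying the planes of $Y$ and $Z$ are distinct. That alone is not sufficient --- two planar polygons in distinct planes can perfectly well span a space of planar polygons; that is exactly what the type~2 relationship permits. You must additionally rule out a type~2 relation between $Z$ and a suitably chosen $Y=NX_1+wY_2\in V$: writing out $N_ZY=cN_YZ$ and using $N_ZN\neq 0$ shows that type~2 confines $X_1$ to a one-parameter family, so a generic $X_1$ escapes both relationship types. Second, the type~2 identity $N_YX=cN_XY$ admits $c=0$, which your normalizations (e.g.\ moving the constant to the other side in the affine case) silently exclude; you should note that this degenerate branch either forces $Z$ into $V$ or forces the chosen $Y$ to be degenerate, and in the affine sweep it occurs only for a measure-zero set of $A$, so it never rescues the putative larger space. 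With these points patched the argument is complete.
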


\begin{proof}
First, assume that $V$ is the space of all affine transformations of $X$. Let $Y$ be a planar polygon such that $Y\notin V$. Then $X$ and $Y$ must have a relationship of type 2: $N_Y X=cN_X Y$. We can assume w.l.o.g that $N_X$  and $N_Y$ are not collinear. Let $R$ be a rotation matrix around $N_x$. Then $RX\in V$ and $N_Y RX=c' N_X Y$. This implies that $N_Y RX=c'' N_Y X$, or $N_Y (R-c'' I)X=0$. Hence $N_Y$ and $(R-c'' I) N_X$ are collinear.  However,  $(R-c'' I) N_X$ and $N_X$ are also collinear and this in turn means that $N_X$ and $N_Y$ \emph{are} collinear. Thus we have a contradiction.

The second part of the proof is subdivided to two cases. First, let $V$ be the space of all polygons which are \emph{parallel} to $X$ and define $Y$ similarly. $Y$ cannot be related to all polygons in $V$ by an affine transformation, so assume w.l.o.g. that $X$ and $Y$ have the relationship of type 2. Then applying the same rotation of $X$ strategy used previously we reach contradiction again.

Finally, we consider the case where $V$ is the space of all polygons with type 2 relationship to $X$. To define this space we need to set another vector $N$ in the plane of $X$, which is shared among the planes of all polygons in this space. Again, $Y$ has w.l.o.g. a relations ship of type 2 to $X$. $Y$ cannot contain $N$ since by construction it would mean that $Y\in V$. This means that there is another vector $N'$ that the planes of $X$ and $Y$ share. The vertices of $X$ are free to move in the direction of $N$, and the new polygon $X'$ will still be in $V$. $X'$ cannot hold a relationship of type 2 with $Y$ (since $N$ and $N'$ are different) and $X'$ can be chosen so it will not be an affine transformation of $Y$, and we reach contradiction yet again.
\end{proof}

In reality, to avoid having to specify an explicit normal for each face having a relationship of type 2, we used three cases when specifying relationships types for faces. The first case, which we call the \emph{affine case}, is simply when all faces have type 1 relationship. In the second case, the target face normal was chosen to be identical to the source normal. The subspace generated by this case is that of all polygons which are parallel to the source polygons, hence, the \emph{parallel case}. In the third case, the \emph{vertical case}, all the face normals were set to the \emph{up} (vertical) vector. The justification for this is the fact that many meshes, especially architectural meshes, have a prominent up direction.

Theorem 3 tells us that the construction of linear subspaces of \emph{PM's} can produce all the maximal linear subspaces. However, some linear subspaces constructed that way may not be maximal. This can happen when the constraints imposed on a face by other faces, and its own linear subspace, cause it to be in another linear subspace. For example, the red cube in Fig. \ref{fig:4} was deformed in the parallel subspace. In this space, each face can only be stretched in the obvious directions, which is a subset of the affine transformations of the face. Thus, the parallel subspace in that case is not maximal, since it is contained in the affine subspace. Note that by removing a single face from the cube, the linear subspaces become different. These situations are easily detectable, and the face can be reassigned. However, in our experiments we rarely encountered such situations.

It is now easy to show that there is a piecewise linear path between any two PM's in the manifold: using the affine space generated by the two meshes, they can be projected to the same plane, where they share the parallel space. This construction however is not very useful as it does not provide any insight into the manifold itself. Nevertheless, it forms a loose "lower bound".

In our examples, the relationship types per face were color coded by blue, red and green for the affine, parallel and vertical cases, respectively. When more than a single relationship type is used to generate the subspace, it is referred to as a \emph{mixed space}. We note here that in the case of the affine and vertical spaces, $B$ can be separated to three identical matrices, operating on $x$,$y$ and $z$ separately. Exploiting this, the performance of the algorithms presented in the next section can be significantly improved.

\begin{figure}[th]
\centering
\includegraphics[width=1\linewidth]{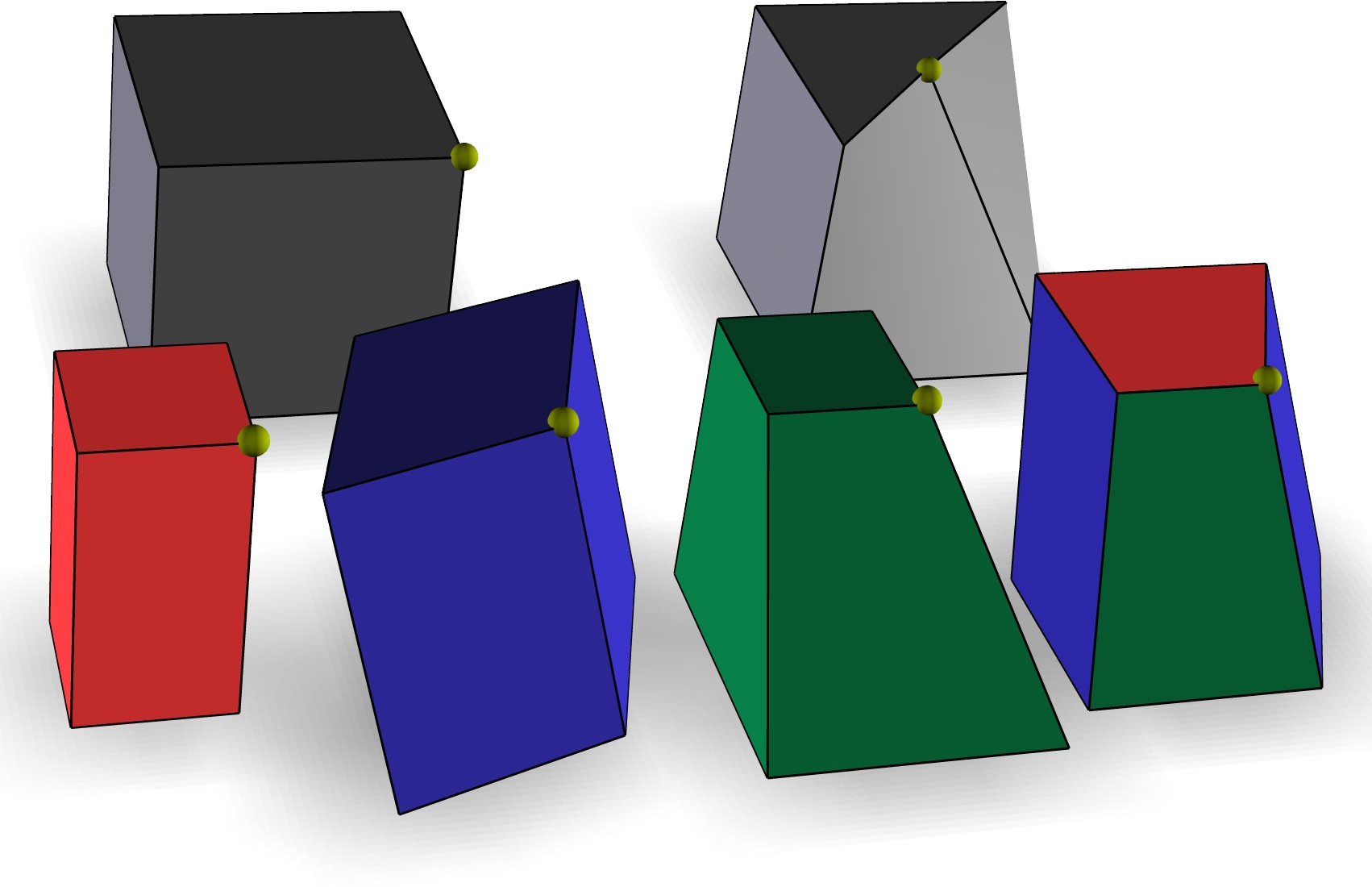}
\caption{Hexahedron in different subspaces generated by the (gray) cube on the top left. They are the closest ones in their subspaces to the (gray) non-PM on the top right, subject to the hard constraint imposed by the yellow vertex.}
\label{fig:4}
\end{figure}

\textbf{Degrees of freedom}. The number of degrees of freedom (NDOF) of a linear subspace of PM's is exactly the dimension of the nullspace of $B$. We can estimate the NDOF in some specific cases, such as when the space is not mixed. The NDOF is then exactly the co-rank of $B$. However, this value depends too much on the current embedding of the PM and does not give any insight into the relation to its topology. We instead provide a lower bound on the NDOF for a given PM, which can be inferred from the topology alone.

Denote by $N_v$,$N_b$,$N_e$,$N_f$,$N_c$ the number of vertices, boundary vertices, edges, faces and corners of the PM, respectively. The number of variables (the mesh vertex geometry) is always $3N_v$. In the affine case, the number of equations is $3N_c$, but each face is determined by just three vertices. Hence a lower bound on the NDOF is $3(N_v+3N_f-N_c)$. Similarly, in the parallel case the lower bound is $3N_v-N_c+N_f$, and in the vertical case it is $3N_v-2(N_c-2N_f)$.

We can use the generalized Euler formula, $N_v-N_e+N_f-b=2g$, where $b$ is the number of boundaries, and $g$ is the genus of the mesh, and the fact that $N_c=2N_e-N_b$ to obtain
$$N_c=2(N_v-2g+N_f-b)-N_b$$
Plugging this into the formulas for the NDOF yields an expression that does not depend on $N_c$ and $N_e$. For (semi-) regular graphs, $N_f$ can also be expressed using $N_v$ and $N_b$ and vice-versa, which may give more intuitive results. Additionally, we define the \emph{number of free vertices} (NFV) as the NDOF divided by 3. The NFV roughly gives the number of vertices that can be fixed independently. We list the minimal NVF for quad and hex meshes for both cases in Table 1.

The table shows that the minimal NFV for quad meshes in the affine and parallel cases is determined by the size of the boundary. In fact, our experiments show that, apart from very symmetric cases like spheres or tori, the minimal NFV for the affine case is the true NFV, up to a global transformation. This means that there is very little that can be done with closed quad meshes in the affine case. The situation is even worse for hex meshes: unless the mesh is just a strip of hexagons, the minimal NFV will be negative. 
\begin{wrapfigure}{r}{0.13\textwidth}
  \begin{center}
    \includegraphics[width=0.12\textwidth]{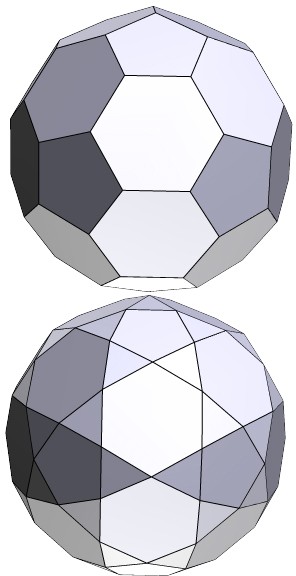}
  \end{center}
\end{wrapfigure}
In fact, we prove in Appendix 2 that the actual NFV is 3 for any 3-regular mesh without a boundary. A trick that can be used to increase the NFV is to apply a half-edge subdivision to the hex mesh (see inset). Technically, the new mesh will not be a hex mesh, but it might retain the "look" of the original hex mesh, and the minimal NFV will be much higher. As for the parallel case, it is easy to show that for closed 3-regular PM's, the NFV is exactly $N_f$. 

\begin{table}
\centering
\caption{Minimal number of free vertices (NFV) in different subspaces}
\label{tab:NFV}
\begin{tabular}{c c c}
\hline\noalign{\smallskip}
         &   Quad mesh & Hex mesh \\
\hline\noalign{\smallskip}
Affine   & $\frac{N_b}{2}+b+2g$ & $\frac{-N_v}{2}+\frac{3}{4}N_b+\frac{3}{2}b+3g$ \\
\hline\noalign{\smallskip}
Parallel & $\frac{N_b}{2}+b+2g$ & $\frac{N_v}{6}+\frac{5N_b}{12}+\frac{5b}{6}+\frac{5g}{3}$ \\
\hline\noalign{\smallskip}
Vertical & \multicolumn{2}{c}{$-\frac{N_v}{3}+\frac{2N_b}{3}+\frac{4b}{3}+8g/3$} \\
\hline\noalign{\smallskip}
\end{tabular}
\end{table}

\section{Exploring Linear Subspaces}
\textbf{Overview}. Once all faces of the mesh have relationship types assigned to them and the matrix $B$ is computed, we can begin the exploration of $null(B)$. While we can do this by simply computing an orthogonal basis for $null(B)$, it may not be very useful: this basis will contain random PM's. Instead, we discuss ways to create more meaningful shapes, which are targeted toward different levels of editing.

\textbf{Eigenshapes}. \cite{Yang:2011:SSE:2070781.2024158} proposed to explore the manifold of PM's not by explicitly setting positional constraints, but by traversing the neighborhood of the PM. This is done by choosing a few directions (two or three for easy navigation) on the osculant which match the manifold the best. Using linear subspaces, we do not have to worry about going far away from the manifold, which allows us to be more adventurous with the exploration. We propose using the PM's "harmonics" as a basis for exploration. More precisely, we use the eigenvectors of the Laplacian $L$ constrained to the linear space, which we call eigenshapes. These are defined by the constrained Rayleigh quotient:
\begin{equation} \label{eq:4}
\max_X\frac{X^TLX}{X^TX}  \quad s.t.  BX=0
\end{equation}

The solution to this problem is found in \cite{Gander1989815} as the eigenvectors of $PLP$ where $P=I-B^T (BB^T )^{-1} B$. See implementation details in Section 4 on how to compute the eigenshapes efficiently. To effectively visualize the eigenshapes and to explore them efficiently, we suggest the following idea: add the eigenshapes to the source PM and apply a "band-pass-filter" to it. By sliding the filter we can quickly see how eigenshapes of different frequencies affect the PM (Fig. \ref{fig:11})

\textbf{Sparse shapes}. Habbecke and Kobelt \cite{Habbecke:2012:LAN:2322116.2322124} discussed editing of constrained meshes, where their goal was to be able to reposition a vertex while making as little as possible change to the rest of the mesh and satisfy the constraints. This addresses the well-known problem of editing with constraints, where making a change in one portion of a mesh damages the work that was already done elsewhere in the mesh. Their approach is based on linearizing the constraints and finding sparse solutions to the linearized system. The same strategy can be used to deform PM's and in fact, one of the constraints treated in \cite{Yang:2011:SSE:2070781.2024158} is the planarity of faces. In terms of basic shapes, in order to be able to move just a small set of vertices, a shape where most of the vertices lie on the origin is needed. These \emph{sparse shapes} are just sparse vectors in $null(B)$. To find sparse solutions, Habbecke and Kobelt employ the Orthogonal Matching Pursuit (OMP) algorithm \cite{omp}, and the same can be done to find sparse shapes.

For many subspaces, the only sparse shapes that can be found are not sparse at all. For example, the affine space for quad meshes contains truly sparse shapes only for very symmetric cases (Fig. \ref{fig:5}). In these cases \emph{approximate} sparse shapes - shapes that are not in the linear subspace but close to it - can be found instead. For comparison, the accurate sparse shape in the middle of Fig. \ref{fig:6} has $||BX|| \approx 10^{-12}$, and the approximate sparse shape has $||BX|| \approx 10^{-4}$. The original PM was produced by planarizing a deformed torus, which had $||BX||\approx 0.1$.

\begin{figure}[th]
\centering
\begin{tabular}{ccc}
    Affine & Dual & Vertical \\
    \includegraphics[width=0.3\linewidth]{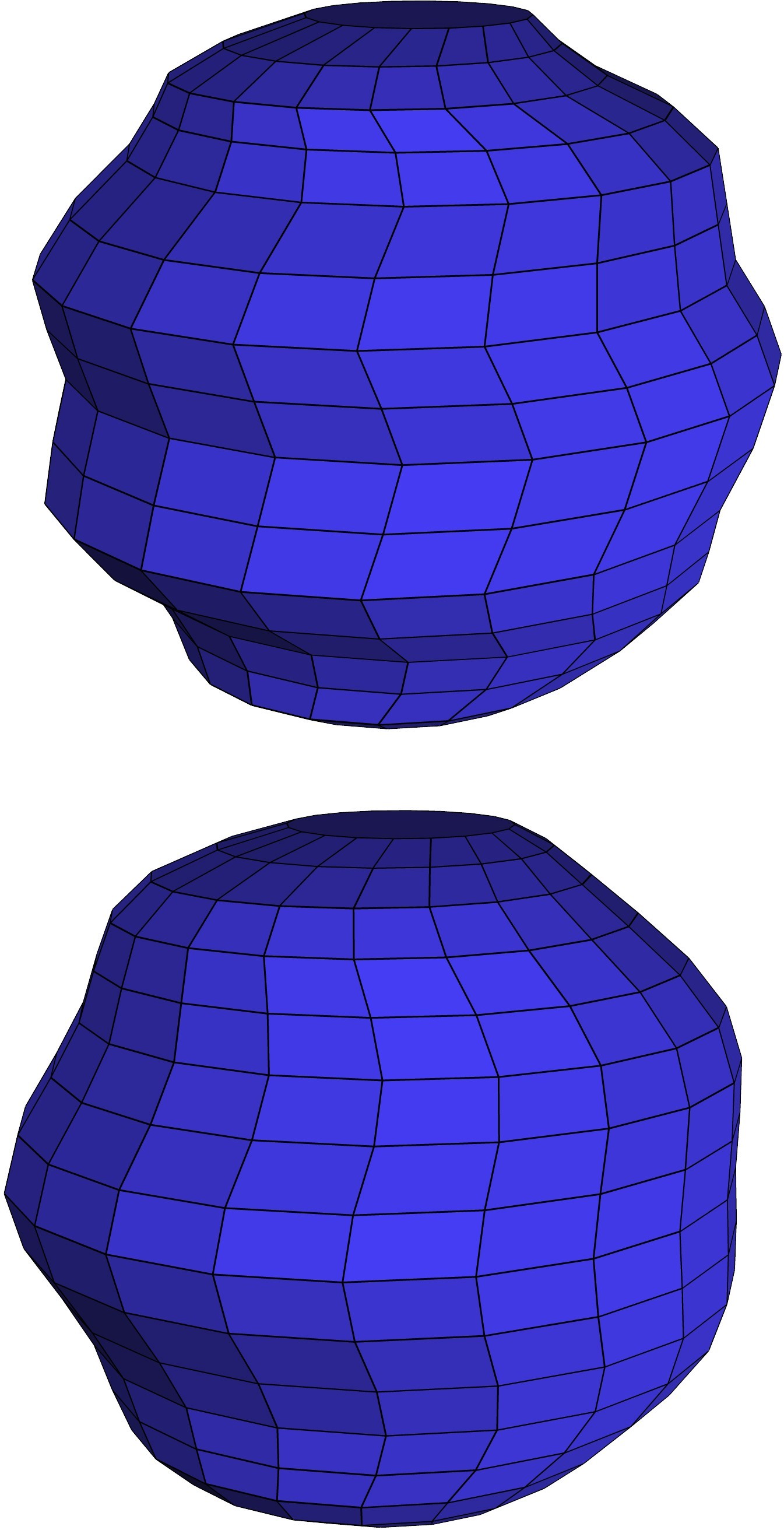} &
    \includegraphics[width=0.28\linewidth]{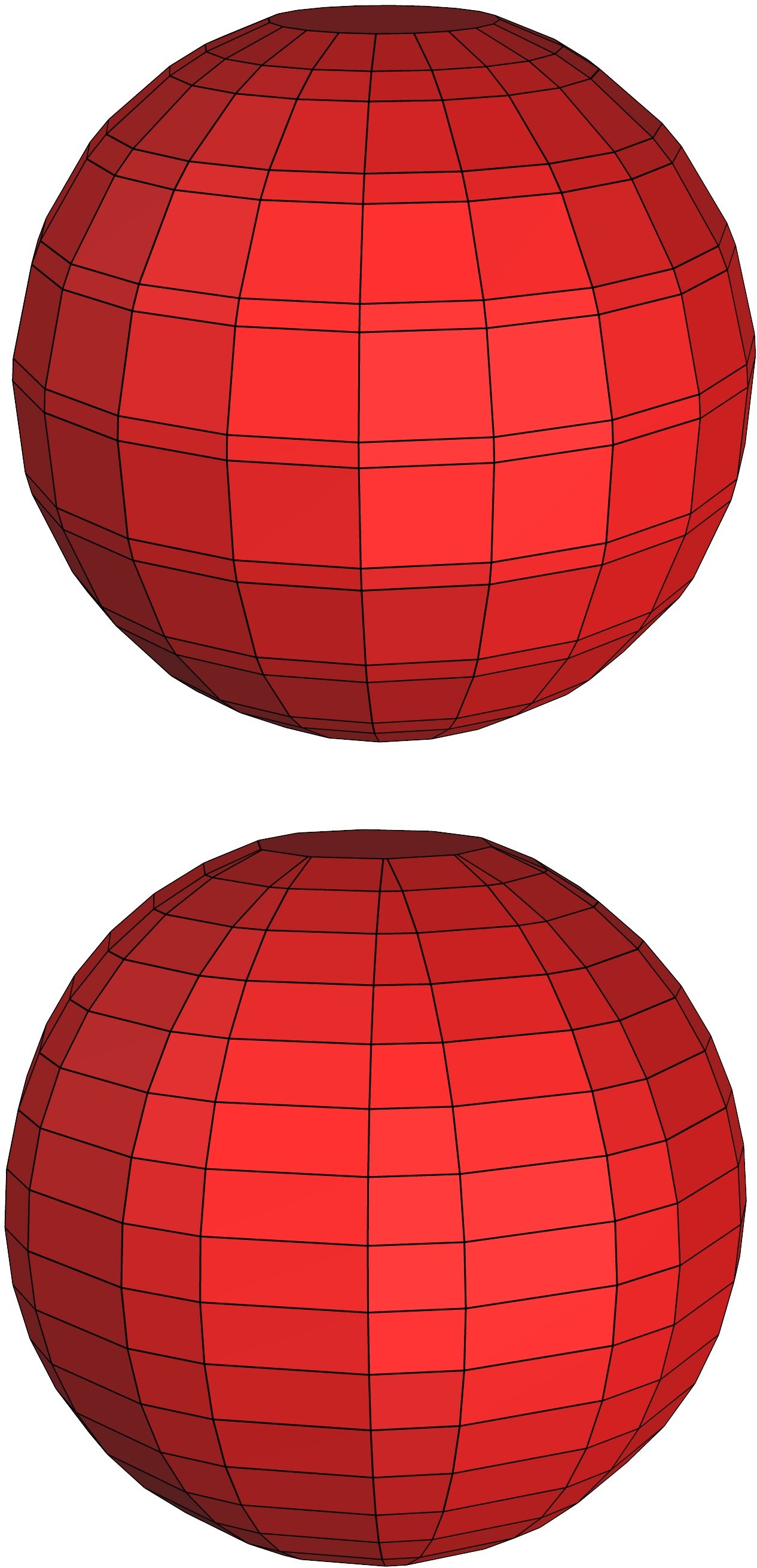} &
    \includegraphics[width=0.3\linewidth]{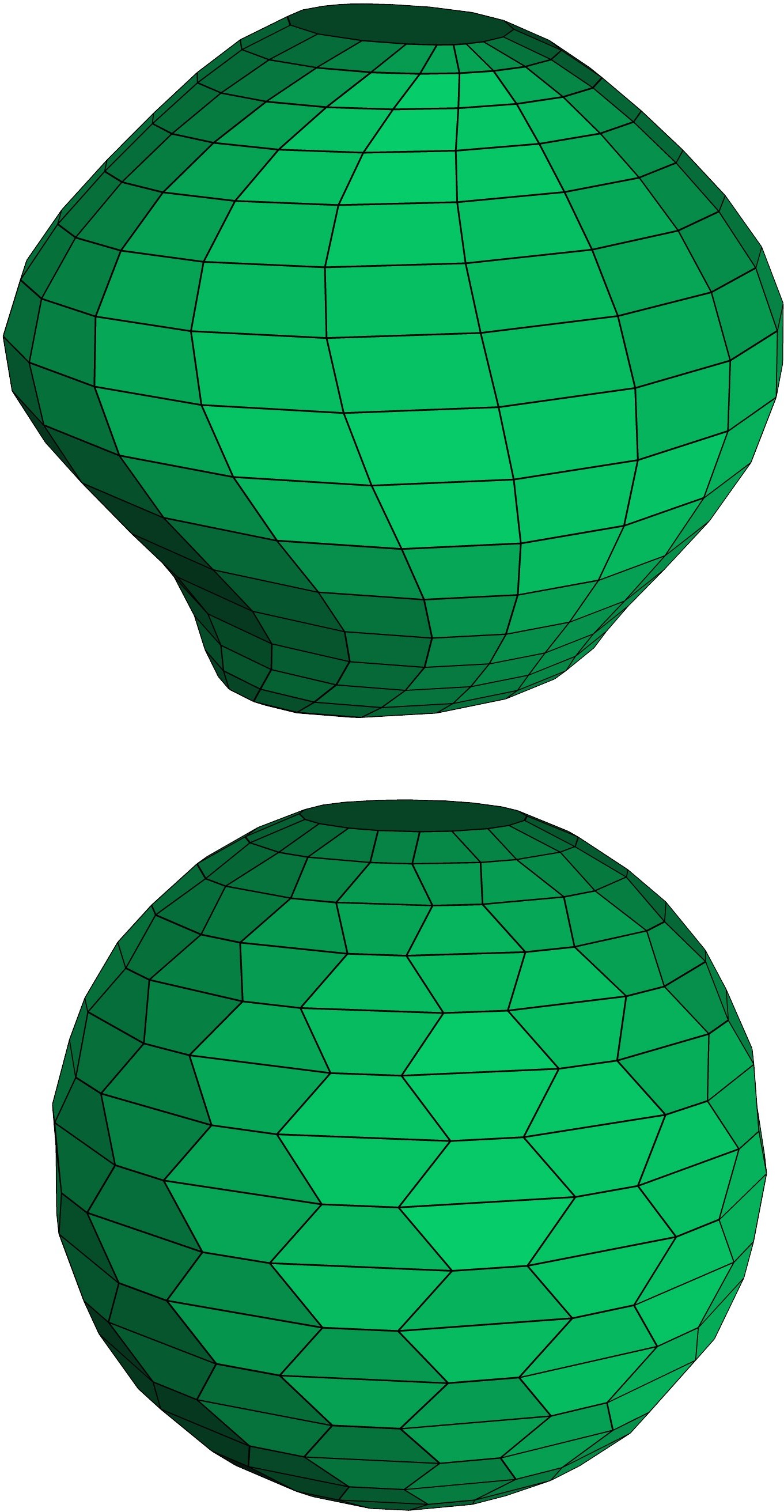}
\end{tabular}
\caption{Adding eigenshapes of different subspaces to a simple spherical quad PM. See also accompanying video.}
\label{fig:5}
\end{figure}

\begin{figure}[th]
\centering
\includegraphics[width=1\linewidth]{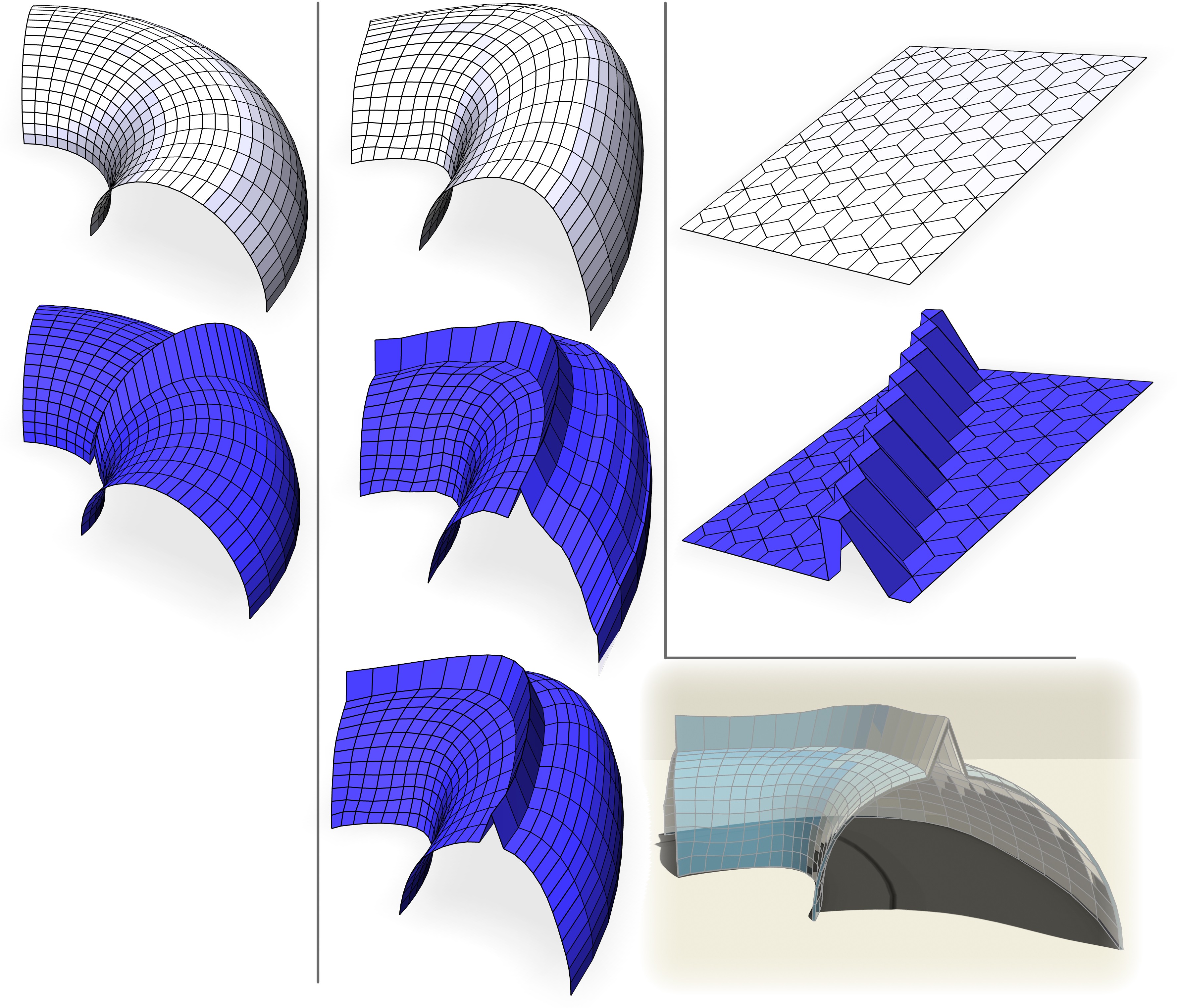}
\caption{Sparse shapes. (Left) Part of a symmetric torus quad PM, having an accurate sparse shape. (Middle) Deformed torus. Its accurate sparse shape is not sparse at all, but it has an inaccurate sparse shape. (Right) Sparse shape of a flat PM.}
\label{fig:6}
\end{figure}

\textbf{Fundamental shapes}. While a sparse shape changes only a small number of vertices, it can still be non-local, moving vertices on opposite sides of the PM. In many cases a shape with more locality is required; one that perhaps moves \emph{all} vertices, but to a lesser extent. To elaborate, suppose a vertex $v_i$ has been selected. We may then define the fundamental shape associated with $v_i$ as the solution to the optimization problem
$$
\min_X || X-\delta_i ||^2 + \lambda ||LX||^2 \\
s.t.  BX=0$$
where $\delta_i$ is a vector whose only non-zero elements are the ones corresponding to $v_i$ and $LX$ is a regularization term. Of course, both the distance function and the regularization terms can be replaced by other similar functions.

\textbf{Handle-based deformation}. PM's can be deformed directly, and the handle-based approach is probably the most natural metaphor to use (excluding, perhaps, the recent curve-based approach \cite{zhao-2012-idecm}). This was studied in detail in \cite{Vaxman:2012:MPM:2346796.2346801} and \cite{planarization} for the case of PM's in the affine case only, where an As-Rigid/Similar-As-Possible (ARAP/ASAP) deformation was computed within the resulting subspace. The well-known solution to the ARAP/ASAP deformation problem uses an alternating local/global scheme \cite{Liu2,Sorkine:2007:ASM:1281991.1282006}. The only difference when applying this to PM's is that the constraints defining the linear subspace must be satisfied when solving the global steps. In Fig. \ref{fig:7} we used the same method as in  \cite{Vaxman:2012:MPM:2346796.2346801} to deform in an ASAP way a half sphere hex mesh in the non-mixed spaces. The boundary was kept fixed and one vertex on the top was moved slightly higher. The affine subspace allows only global transformations and the parallel subspace produced self-intersections almost immediately. The vertical subspace produced pleasing, nontrivial results.

\begin{figure}[th]
\centering
\includegraphics[width=1\linewidth]{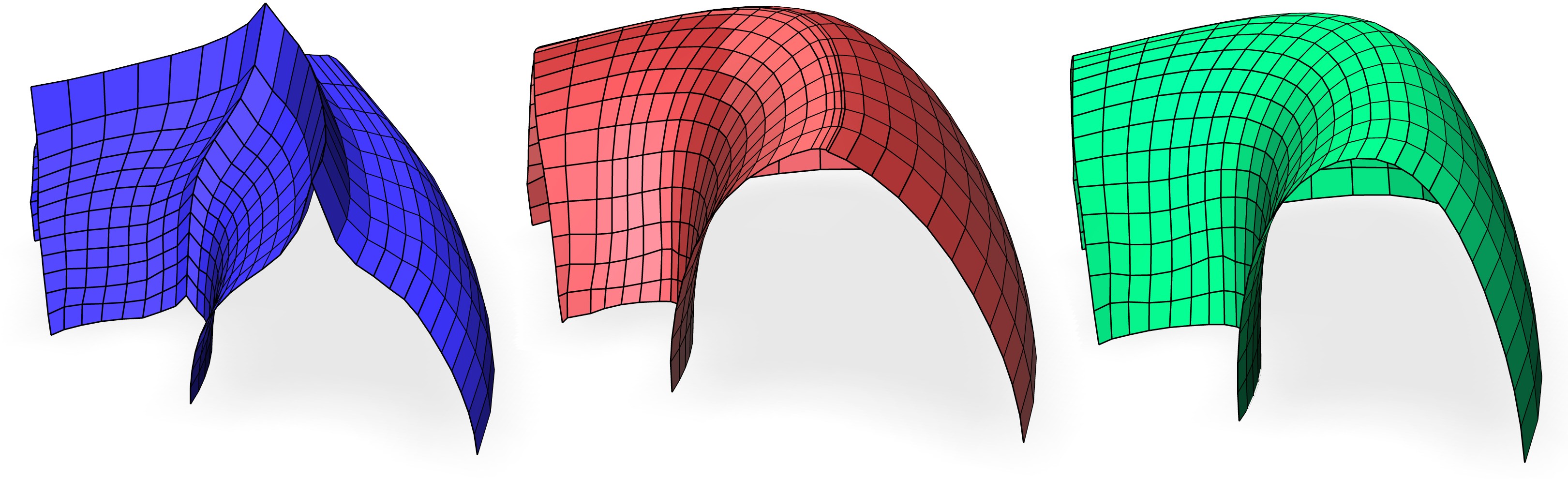}
\caption{Fundamental shapes of the deformed torus.}
\label{fig:7}
\end{figure}

\begin{figure}[th]
\centering
\includegraphics[width=1\linewidth]{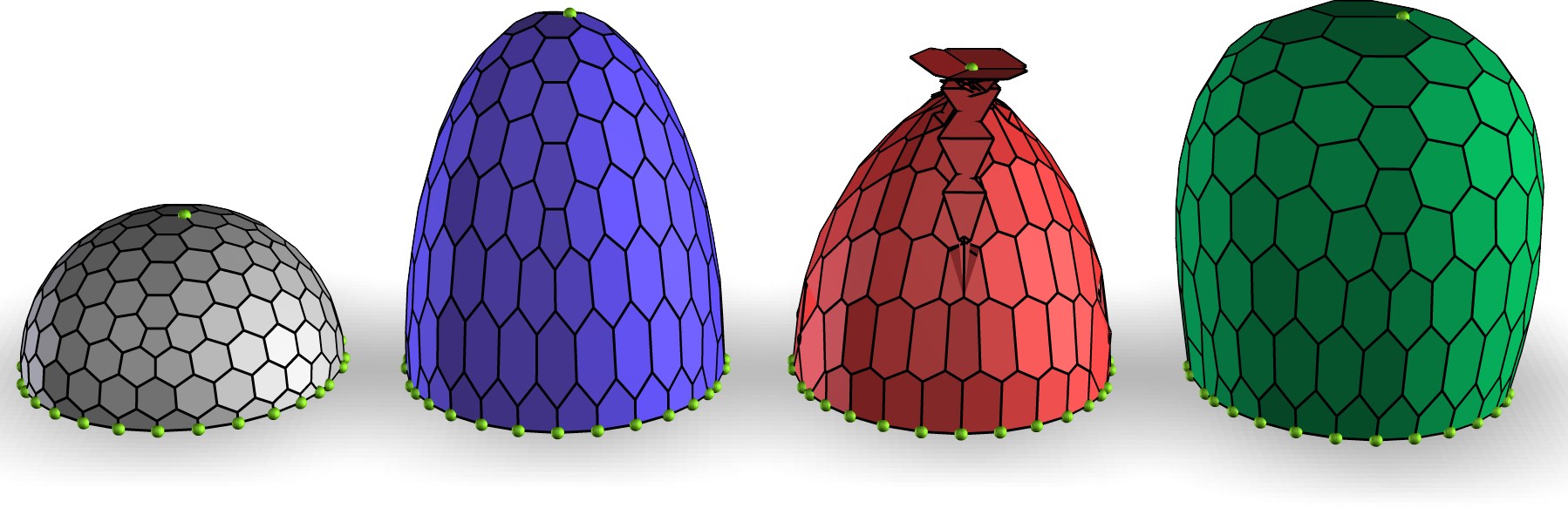}
\caption{ASAP deformation of a hexagonal half sphere. Note that in the (blue) affine subspace, only global transformations are possible.}
\label{fig:8}
\end{figure}

\textbf{Dual exploration}. Every polyhedron admits a family of dual polyhedra, most notably the polar dual \cite{richter1996realization}, having the property that the vector to each of the dual vertices is parallel to the corresponding primal face. Usually polar duals are associated only with star-shaped polyhedrons, since otherwise the polar dual may self-intersect. Here we ignore this and associate polar duals with general, non-convex PM's. Obviously the polar dual associated with a PM is itself a PM, so the ideas presented in this paper apply also to the space of polar duals to a given PM. This essentially means that we can explore the subspace of the PM based on its \emph{face normals} instead of the vertex positions. Although the subspaces defined using the face normals are linear, since they are the same as the linear spaces of polar duals, they are not linear with respect to the vertices of the primal mesh. The reason is that the duality transformation is not linear. Still, it involves only solving a sparse linear system and can be done in real time.

The benefit of dual exploration of PM subspaces is that this gives a completely different number of DOFs compared to the primal space, based on the normal of the faces instead of the vertices. As an extreme example, the duals of any 3-regular meshes are triangle meshes, which trivially preserve planarity. Hence, editing a 3-regular mesh in the normal domain is also trivial: any choice of normal will result in a valid PM. Fig. \ref{fig:9} shows the dual deformation of two PM's. The results there could not have been achieved using only one primal linear space.

\begin{figure}[th]
\centering
\begin{tabular}{ccc}
    Primal & Dual & Deformed Primal \\
    \includegraphics[width=0.28\linewidth]{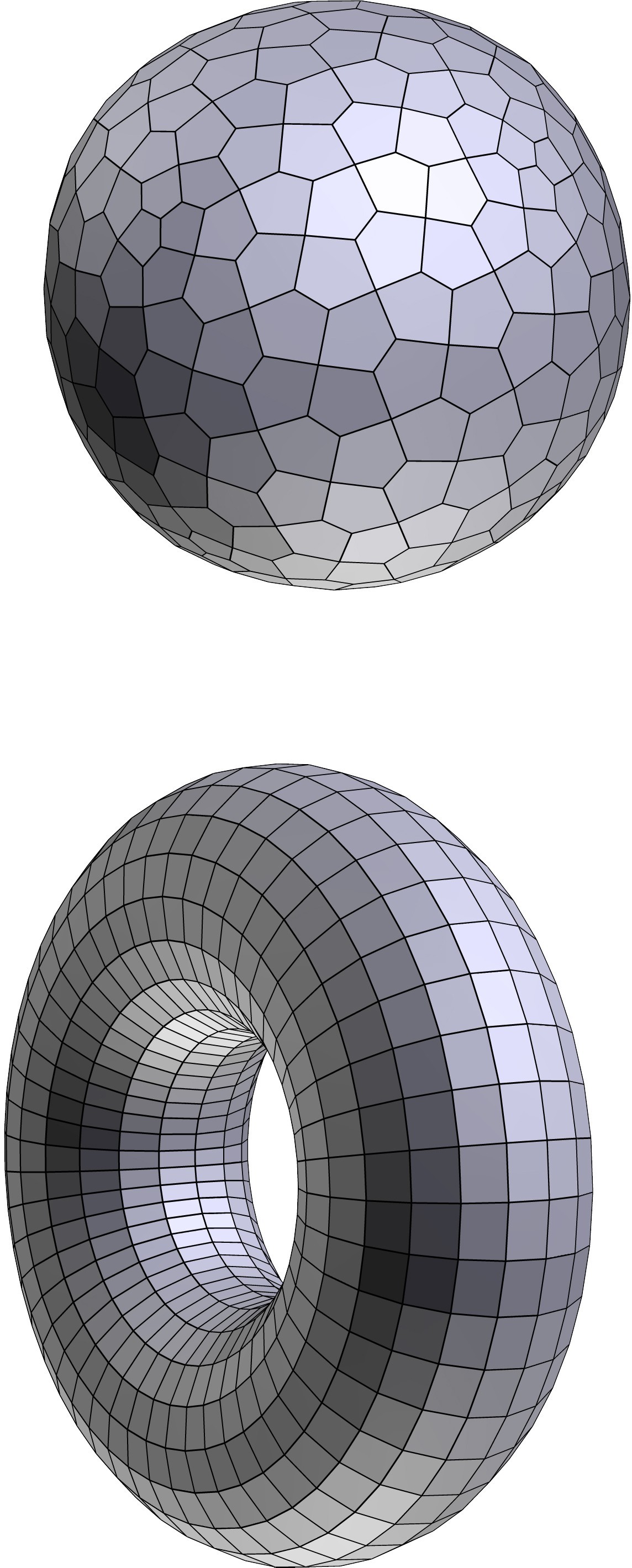} &
    \includegraphics[width=0.28\linewidth]{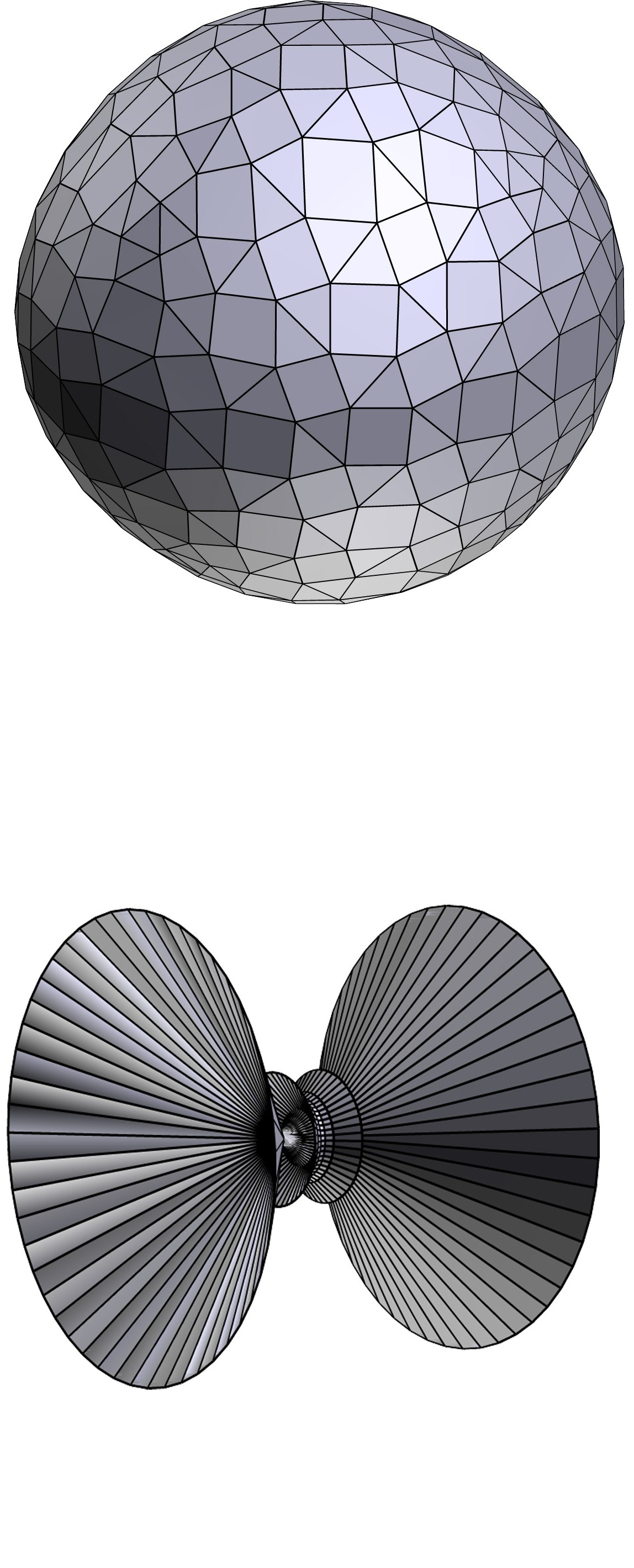} &
    \includegraphics[width=0.28\linewidth]{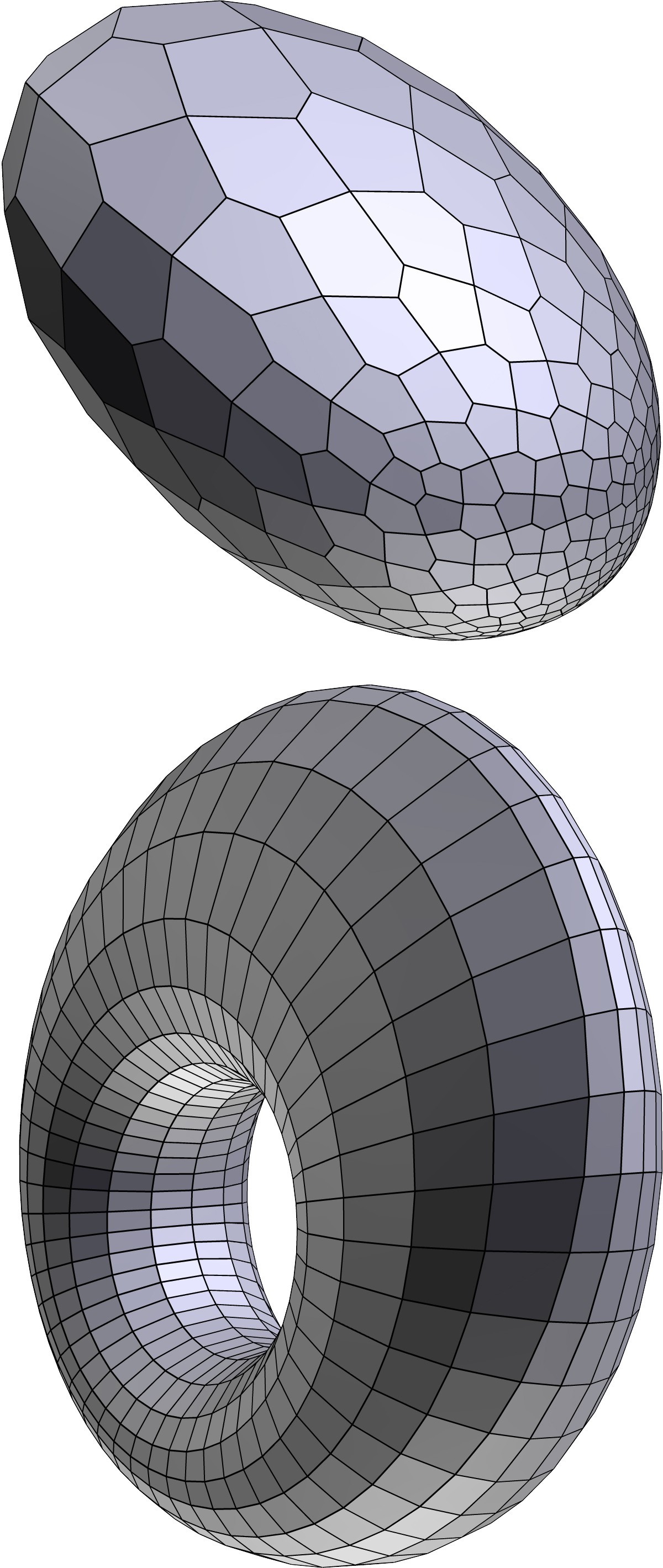}
\end{tabular}
\caption{Deformation of a  (left) sphere and a torus using the (middle) polar dual. In both cases an eigenshape of low frequency was added to the dual mesh, and a new (right) primal mesh results.}
\label{fig:9}
\end{figure}

\section{Discussion and Future Work}
\textbf{Implementation details}. Most of the software implementation was done in MATLAB, and was wrapped as a plugin for Autodesk Maya, for its user interface. The matrix $B$ was built by constructing $B_f$ face-by-face. $B_f$ as defined here is already not full rank, so we reduced the number of equations per-face using SVD. The construction takes less than a second for meshes with approximately a thousand faces.

To compute the eigenshapes, a sparse QR decomposition was used to generate an orthonormal basis $N$ of $null(B)$, then any $X$ in $null(B)$ can be written as $NW$ for some $W$, and 
$$\max_X \frac{X^T LX}{X^T X}=\max_X \frac{W^T N^T LNW}{W^T N^T NW} = \max_X \frac{W^T N^T LNW}{W^T W}$$
which is solved using the eigendecomposition of $N^T LN$. This approach gives much better precision and performance than the formula in \cite{Gander1989815}, since pseudoinverse computation is avoided and full size singular value decomposition is replaced with a much smaller eigenvalue decomposition. For the handle-based deformation, the relevant matrices were decomposed in a preprocessing step. We did not invest much effort to use the best possible decomposition and carefully tune the parameters. Specifically, we used LDL decomposition for the initial mesh approximation step, but a sparse QR decomposition for the global steps in the ARAP/ASAP deformation, due to numerical instabilities caused by LDL there.

\textbf{Limitations}. Our assumption is that the initial PM has planar faces. Otherwise, many of the calculations made are not well-defined. Of course, the planarity of faces can only be up to some numerical precision. We have found that the affine case is less sensitive to non-planar faces than the other cases. The mesh in Fig. \ref{fig:11} does not have planar faces, yet the eigenshapes computed for it in the affine space do not cause them to be "less" planar. On the other hand, the eigenshapes of the parallel case (not shown) quickly deteriorate the quality of the mesh. 

\textbf{Creating an initial PM}. The linear subspaces described here need an initial PM realizing a given topology to be generated from. The simplest way to generate such a PM is to take a non-polyhedral mesh with the given topology and project it to a plane. The original mesh can then be projected into a linear space generated from the flat mesh. The result of this, however, is usually unsatisfactory and we did not use it. Most of the PM's in this paper were created by experimenting with the TopMod 3.0 software \cite{topmod}, where we used the variety of subdivision schemes implemented there to create elaborate meshes from simple solids. If only the mesh topology is given, then a simple "spring-based" embedding, such as Tutte's \cite{citeulike:8044754}, should suffice. 

\textbf{Selecting the Right Space}. There are, literally, an uncountable number of linear subspaces available for a single PM. Even if we limit ourselves to the three cases mentioned above, the number of possibilities to assign them to faces is exponential and manually assigning them is tedious. We did not investigate methods to find the optimal linear subspace to work with, or even attempt to define what exactly optimal means. A simple definition could be: the subspace with the highest dimension. Experimentally we observed that in many cases the parallel space had the largest dimensional. However, this subspace does not generate much visual variation in the overall look of the PM, compared to the other spaces. This problem remains open for now, and we reserve it for future work. In practice, switching between the non-mixed cases provided sufficient variation.

Currently we use a number of heuristics while experimenting with our system. The affine space is easier to work with when there are many DOFs, as is the case for quad meshes with boundaries. In situations where the number of DOFs is too small, this is usually caused by faces with more than four edges or vertices of degree three. These can be automatically reassigned to the other two cases to achieve more freedom. On the other hand, when using the parallel or the third case, some faces may enjoy too much freedom and misbehave while deforming. These can be reassigned to the parallel case, since it better preserves the shape of a polygon.

A related problem is how to interpolate PM that are not related by a single linear space. We have shown that any two can be connected by a succession of three linear spaces, which is not very useful for interpolation. An interesting thing to try is to approximate paths in the manifold of PM's by linear segments using the linear subspaces. 

\textbf{Design pipeline}. Our experiments led us to the following pipeline for designing a PM. It is important to remember that we are mere computer scientists, and not artists. For flat meshes, the first step is to afford them some height. This is done by regular deformation followed by a planarization step, or by using the affine linear subspace and applying the handle-based deformation or using the eigenshape band-pass-filter technique. The reason for not using the parallel or vertical subspaces is that they cannot "unflatten" the PM. However, mixed spaces can also be used. Once we have a PM with some volume, the rest depends on the effect we aim to achieve. For large deformations we use the affine subspace when working on quad meshes with boundaries, and the other subspaces otherwise. To add variation or \emph{waviness} to the PM, we use the eigenshapes. The affine eigenshapes are useful when the overall look of the PM needs to be maintained but the shapes of individual faces need to be changed. Using the parallel eigenshapes is an efficient way of adding variation to meshes having uniformly-sized faces. We show a variety of results in Figs. \ref{fig:9}, \ref{fig:10}, \ref{fig:11} (see also the accompanying video).

The sparse and fundamental shapes, while helping to visualize the limitations of various subspaces, have not proven very useful for the design process. The reason is that, by definition, they can only make the PM less smooth, which usually means less visually pleasing. However, we believe they are valuable as a theoretical tool for studying PM's. One future research direction could be to use them to decide where to make small adjustments to the topology of the mesh in order to add more freedom to specific places.

\begin{figure}[th]
\centering
\includegraphics[width=1\linewidth]{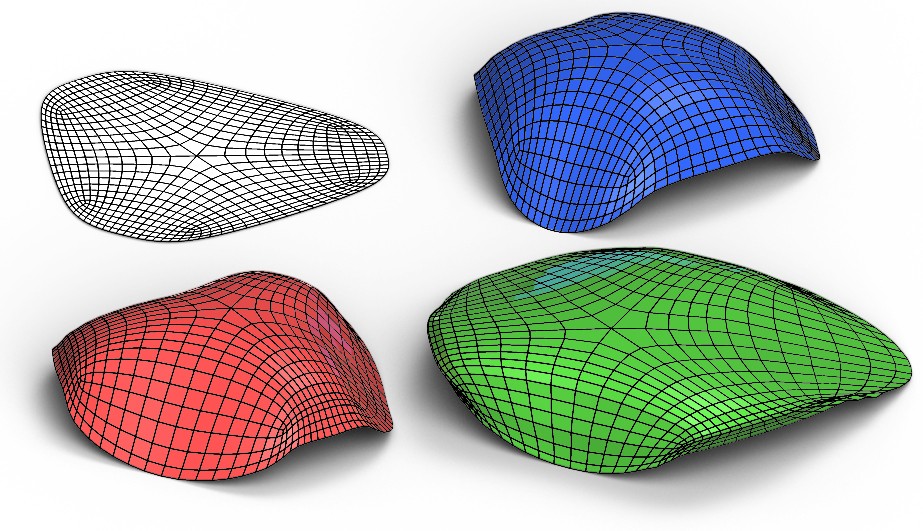}
\caption{Designing a PM from a planar graph. The graph was created by projecting a non-PM to the plane. It was then given height using the affine subspace, and then deformed using eigenshapes in the parallel and vertical spaces.}
\label{fig:10}
\end{figure}

\begin{figure}[th]
\centering
\includegraphics[width=1\linewidth]{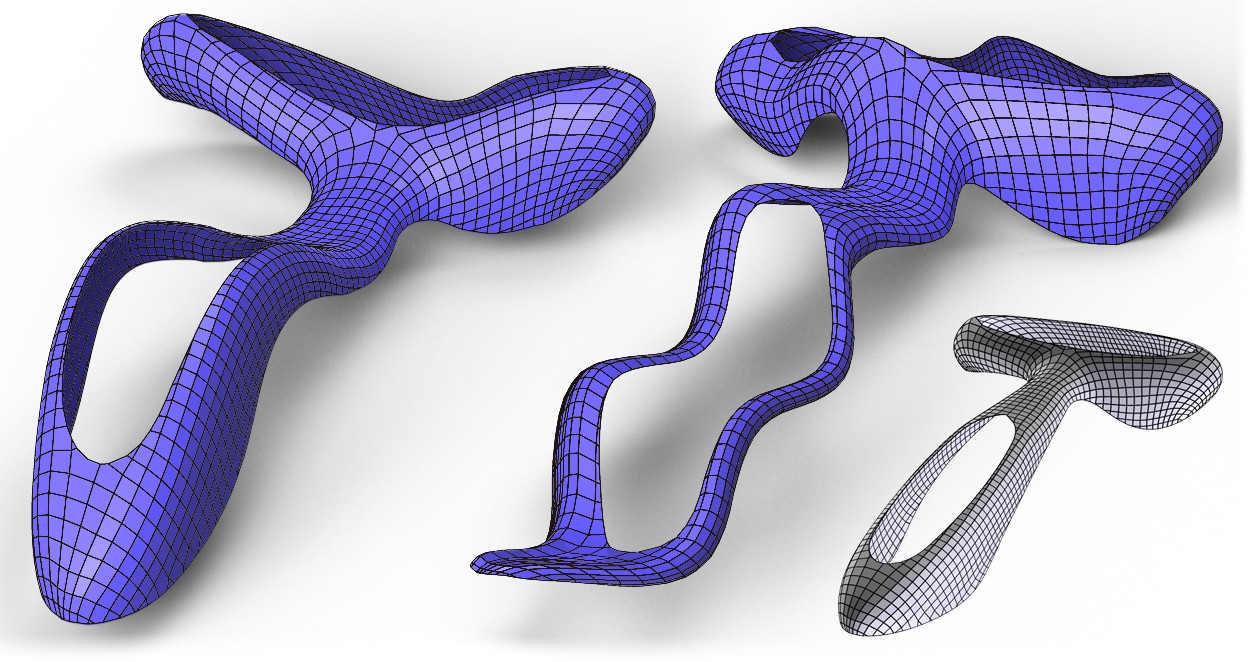}
\caption{The Yas model deformed using eigenshapes of different frequency in the affine subspace.}
\label{fig:11}
\end{figure}

\begin{figure}[th]
\centering
\includegraphics[width=1\linewidth]{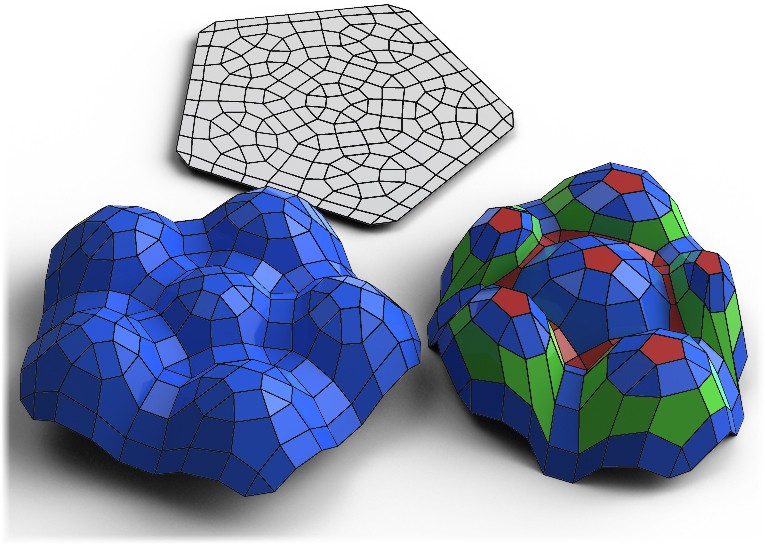}
\caption{Another example of designing a PM from a planar graph. The graph was created by subdividing a pentagon using several schemes until the desired result was achieved. It was given height in the affine subspace and then deformed in a mixed subspace.}
\label{fig:12}
\end{figure}

\begin{figure}[th]
\centering
\includegraphics[width=1\linewidth]{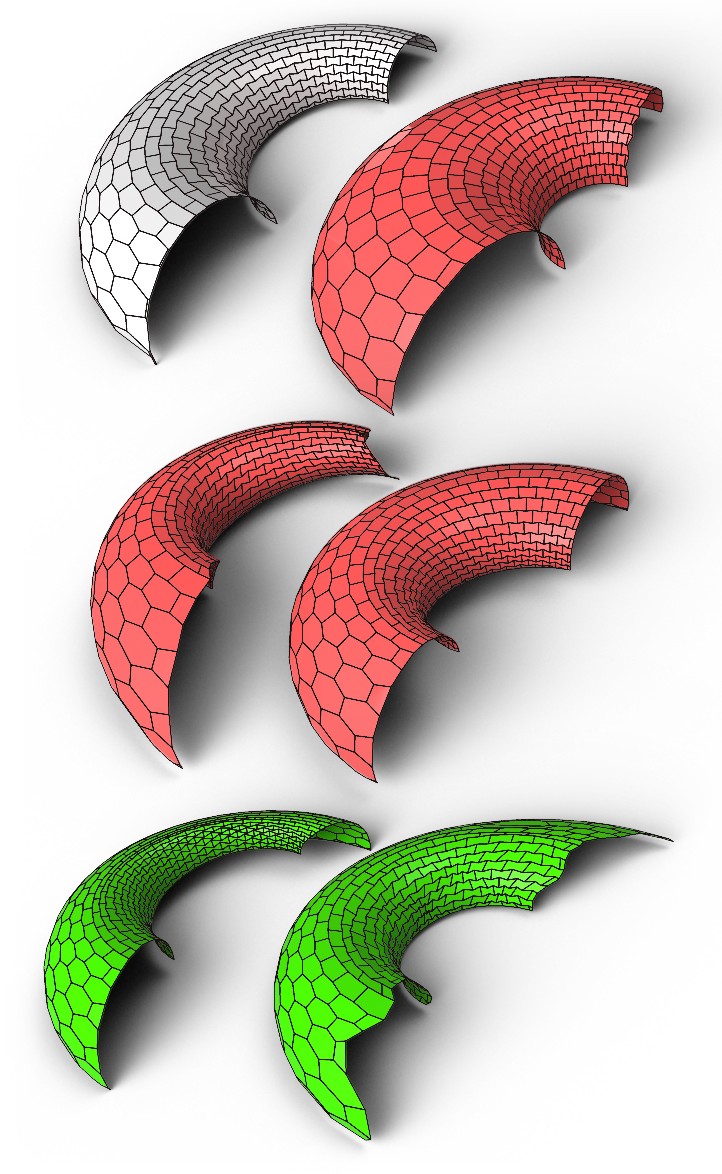}
\caption{Deformation of a planar hex mesh \cite{wang2008hexagonal} using eigenshapes in the parallel and vertical spaces. Note that the affine space had only 12 DOF, while the other cases had a few hundreds.}
\label{fig:13}
\end{figure}

\bibliographystyle{IEEEtranS}

\bibliography{linearpm}

\section*{Appendix 1}
We show a relation to differential equations in the case of the affine space. Assume a regular quad grid in the $xy$-plane and consider just the $z$ direction. Then $L_B$ can be written as a stencil:
$$L_B=\left[ {\begin{array}{*{20}{c}}
{1/4} &{-1/2}&{1/4} \\
{-1/2}&{1}   &{-1/2}\\
{1/4} &{-1/2}&{1/4}
\end{array}} \right]$$
For a 2D function $u$, this stencil is used to find the discretization of the mixed second derivative $u_xy$. Just as PM's are those that satisfy $L_B X=0$ locally, functions that satisfy $u_{xy}=0$ can be viewed as the continuous version of PM's. These are all of the form $u(x,y)=f(x)+g(y)$, where $f$ and $g$ are any functions. Hence we see that in the continuous case, as in the discrete case, the solution is determined by its values on one side of the boundary.

\section*{Appendix 2}
Recall that an orthogonal dual of an embedded planar graph is an embedding of the dual graph such that primal and dual edges are orthogonal. Also recall that a \emph{lifting} of a planar graph is a movement of its vertices in the $z$-direction such that the faces remain planar. Any PM in the affine space of a PM can be reached by consecutively lifting the source PM in the $x$,$y$ and $z$ directions. Hence, the dimension of the affine space is less than three times the dimension of the space of graph liftings. There is a known 1-1 correspondence between orthogonal duals and graph liftings up to translation\cite{richter1996realization} and we can take advantage of their simple geometric representation to get more insight on the space of liftings. Consider any 3-regular graph. Its dual faces are all triangles. Being an orthogonal dual, each dual triangle must have a fixed shape and orientation. In addition, all triangles must be translated and scaled together. Therefore, the entire space of orthogonal duals is determined by 3 values (translation and scale). Therefore, liftings of 3-regular graphs have only 3 DOFs and as a consequence, the affine space of 3-regular graphs has dimension at most 12.

\end{document}